\newtheorem{proposition}{Proposition}
\newtheorem{observation}{Observation}
\def\BState{\State\hskip-\ALG@thistlm}
\newcommand{\field}[1]{\mathcal{#1}}
\newcommand{\red}[1]{{\color{red} #1}}
\newcommand{\blue}[1]{{\color{blue} #1}}
\newcommand{\WLAN}[1]{WLAN$_\text{#1}$}
\def \WLANi{WLAN$_i$\xspace}
\newcommand{\Cs}{\mathcal{C}}
\newcommand{\Ss}{\mathcal{S}}
\newcommand{\Qmat}{\mathbf{Q}}
\newcommand{\piv}{\boldsymbol{\pi}}
\newcommand{\At}{\text{A}}
\newcommand{\Bt}{\text{B}}
\newcommand{\Xt}{\text{X}}
\begin{document}

\title{Analysis of Dynamic Channel Bonding \\ in Dense Networks of WLANs}



\date{}

\author{Azadeh Faridi, Boris Bellalta, Alessandro Checco\thanks{A.~Faridi and B.~Bellalta are with Universitat Pompeu Fabra, Barcelona; A.~Checco is with Trinity College Dublin, Ireland. Corresponding author: A. Faridi. e-mail: azadeh.faridi@upf.edu}}

\maketitle

\begin{abstract}
Dynamic Channel Bonding (DCB) allows for the dynamic selection and use of multiple contiguous basic channels in Wireless Local Area Networks (WLANs). A WLAN operating under DCB can enjoy a larger bandwidth, when available, and therefore achieve a higher throughput. However, the use of larger bandwidths also increases the contention with adjacent WLANs, which can result in longer delays in accessing the channel and consequently, a lower throughput. In this paper, a scenario consisting of multiple WLANs using DCB and operating within carrier-sensing range of one another is considered. An analytical framework for evaluating the performance of such networks is presented. The analysis is carried out using a Markov chain model that characterizes the interactions between  adjacent WLANs with overlapping channels. An algorithm is proposed for systematically constructing the Markov chain corresponding to any given scenario. The analytical model is then used to highlight and explain the key properties that differentiate DCB networks of WLANs from those operating on a single shared channel. Furthermore, the  analysis is applied to networks of IEEE 802.11ac WLANs operating under DCB--which do not fully comply with some of the simplifying assumptions in our analysis--to show that the analytical model can give accurate results in more realistic scenarios.

\textbf{Keywords}: WLANs, CSMA/CA, dynamic channel bonding, dense networks, IEEE 802.11ac, IEEE 802.11ax

\end{abstract}





\section{Introduction}

Wireless Local Area Networks (WLANs) operate in the Industrial, Scientific, and Medical (ISM) radio bands. Licensed operation in ISM bands is reserved for ISM devices. Therefore, WLANs operating in these bands do so in the unlicensed regime and, in most cases, autonomously, as they may belong to different entities. Therefore, centralized spectrum allocation and interference management is generally not a feasible option when several WLANs operate within carrier-sensing range of one another. The CSMA/CA (Carrier-Sense Multiple Access with Collision Avoidance) mechanism used in currently deployed WLANs allows for decentralized sharing of the unlicensed bandwidth between different WLANs. This CSMA/CA mechanism uses a static channel allocation for all transmissions.\footnote{This is true for all of the amendments prior to IEEE 802.11ac.} Namely, the access point of a WLAN selects a channel during the initialization phase, and all nodes in the WLAN use the selected channel for all their subsequent transmissions, regardless of whether or not other channels are available. This leads to an under-utilization of the available bandwidth resources, which could otherwise be exploited to improve the performance of WLANs.

Channel bonding is a technique whereby the participating nodes are allowed to use a contiguous set of available channels for their transmissions, thus potentially achieving a higher throughput \cite{deek2014intelligent}. However, using wider channels also increases the contention between the neighboring nodes, unless the contending nodes are allowed to choose their channels dynamically, based on the instantaneous spectrum occupancy at each transmission. This technique is usually referred to as Dynamic Channel Bonding (DCB) \cite{Gong2011ChannelBounding2,park2011ieee,bellaltachannel}.

If a WLAN using DCB is operating in isolation, all its transmitting nodes will have access to the same channel. In such a scenario, for every transmission of a given target node, either the entire channel is available, in which case the target node will use the entire channel for transmission, or the entire channel is in use by another node, in which case the target node will have to defer its transmission. Therefore, a single WLAN operating under DCB is exactly equivalent to a classical CSMA/CA network using a wider channel. However, when several WLANs operate in the vicinity of each other in a decentralized manner, they may be operating on different, but possibly overlapping, channels. Due to the possibility of partial channel overlap between different WLANs in this case, the selected set of channels by a node in a given target WLAN will be the largest contiguous set not in use by nodes in other WLANs. 


The goal of this paper is to formally analyze the interactions between neighboring WLANs operating under DCB, to which we will hereafter refer as a \textit{DCB network of WLANs} or simply a \textit{DCB network}. We consider a generalized form of DCB, and therefore, our analysis can be easily tuned to particular implementations of DCB, including the one proposed in the IEEE 802.11ac amendment \cite{IEEE80211ac} and other future amendments, such as IEEE 802.11ax \cite{bellalta2015WCM}. Our analysis also applies to Static Channel Bonding (SCB), another channel bonding technique proposed in the IEEE 802.11ac amendment, in which transmissions are only allowed when the entire assigned channel width is free.

We analyze a DCB network of WLANs using a continuous-time Markov chain (CTMC) model. With each WLAN having several choices for its transmission channel, the number of possible states in the constructed CTMC grows rapidly with the number of participating WLANs. Furthermore, since in DCB, nodes choose the largest available channel, some states and transitions that may seem feasible at first are not actually so. Therefore, constructing the CTMC for a given scenario is not a trivial task. We devise an algorithm that constructs the CTMC corresponding to any given scenario and channel configuration.

The CTMC modeling relies on having exponentially distributed backoff and transmission times. In single-channel CSMA/CA networks, it is shown that the state probabilities are insensitive to the backoff and transmission time distributions \cite{liew2010back,van2010insensitivity}. We show that in DCB networks, the insensitivity property does not hold. However, the impact of this sensitivity on expected throughput is quite low. Comparing the results obtained from our analytical model to those obtained through simulation, we show that our analysis can be used to calculate, with high accuracy, the throughput of IEEE 802.11 WLANs with realistic (non-exponential) backoff and transmission time distributions.

We furthermore observe that, in contrast to CTMCs corresponding to single-channel CSMA/CA networks \cite{liew2010back, laufercapacity} and SCB networks \cite{bellalta2015interactions}, the CTMCs corresponding to DCB networks are not necessarily reversible \cite{kelly2011reversibility}. Therefore, many properties of those systems, including the aforementioned insensitivity to backoff and transmission time distributions, do not hold. In general, non-reversible CTMCs are more difficult to deal with since local balance does not hold for them. Despite that, we use our analytcal model to provide intuitive explanations about the behavior of the CTMCs corresponding to DCB networks. 

The use of CTMC models for the analysis of CSMA/CA networks was originally developed in~\cite{boorstyn1987throughput} and was further extended in the context of IEEE 802.11 networks in~\cite{wang2005throughput,durvy2006packing,liew2010back,nardelli2012closed,laufercapacity}, among others. Although the CTMC modeling of the IEEE 802.11 backoff mechanism is less detailed than Bianchi's well-known DTMC model~\cite{bianchi2000performance}, it offers greater versatility in modeling a broad range of topologies. Moreover, experimental results \cite{liew2010back,nardelli2012closed} demonstrate that CTMC models provide remarkably accurate throughput estimates for actual IEEE 802.11 systems. CTMCs have  also been used to model the interactions between neighboring WLANs operating under SCB in \cite{bellalta2015interactions}.

The performance of DCB networks has been studied in \cite{Gong2011ChannelBounding2,park2011ieee} by means of simulation, and focusing only on a few representative scenarios, thus not allowing for a general characterization of the interactions between neighboring WLANs in DCB networks and their possible effects on throughput and fairness in the resource allocation. In \cite{bellaltachannel}, the performance of a single short-range WLAN using both SCB and DCB is analyzed in the presence of multiple interferers, characterizing the conditions under which DCB outperforms SCB.


This paper is structured as follows. In Section \ref{Sec:SysModel} we introduce the system model. In Section \ref{Sec:Analysis}, we present the analytical model, starting with a toy example and continuing with a general analysis. Section \ref{Sec:Properties} presents intuitions on key properties of the DCB networks that are obtained from the analytical framework. In Section \ref{Sec:11acResults}, we apply our analytical model to IEEE 802.11ac DCB networks and show the accuracy of our analysis by comparing them to simulation results.  


\section{System Model} \label{Sec:SysModel}

Consider $M$ WLANs deployed in an area such that all nodes, regardless of whether or not they belong to the same WLAN, are in the carrier-sensing range of one another.\footnote{This scenario is commonplace as the carrier-sense range is usually more than twice the data communication range~\cite{deng2004tuning}. Therefore, nodes in neighboring WLANs can often sense each other.} \WLANi consists of $U_i$ nodes: an Access Point (AP) and $U_i - 1$ stations (STAs) connected to the AP.

\subsection{WLAN Operational Channel Assignment}

Let $C$ be the entire frequency band available. The frequency band $C$ is divided into $N$ channels of equal bandwidth, hereafter referred to as \textit{basic channels}. A contiguous group of basic channels is assigned to each WLAN during the initialization, which will be used by all the nodes within the WLAN. The channels assigned to different WLANs may or may not have overlaps. We are not concerned with how this assignment is done (internally, externally, randomly, or based on a specific algorithm), but interested in modeling and analyzing the interactions between the WLANs for a given channel assignment. We refer to the channel assigned to \WLANi as $C_i \subseteq C$, and the number of basic channels it contains as $N_i \leq N$. To accommodate for possible limitations imposed by different existing or future protocols, we define a set $\Cs$ from which each $C_i$ has to be chosen. Each member of the set $\Cs$ is a channel consisting in a contiguous group of the basic channels in $C$. The set $\Cs$ is predefined by the channel access protocol. If no limitation is imposed by the protocol, $\Cs$ will contain all possible contiguous groupings of the basic channels in $C$. For the IEEE 802.11ac DCB scheme, the set $\Cs$ is presented in Section \ref{Sec:11ac_channelization}.


Since our goal is to study the interactions between the WLANs, we treat each WLAN as a single entity throughout the paper, unless otherwise stated. We say \WLANi is transmitting if a node operating within it is transmitting. All nodes in all WLANs are assumed to be saturated, i.e., they always have a packet ready for transmission. 




%

\subsection{Generalized Dynamic Channel Bonding (DCB) Scheme}\label{Sec:GenDCB}

In this section, we describe DCB in its most general form. The analytical model that will be presented in Section \ref{Sec:Analysis} can be used to analyze any DCB scheme fitting this general description. However, we will use the specific scheme presented in Section \ref{Sec:P2DCB} for demonstration purposes. 

Under DCB, when \WLANi is initiated, it selects a single basic channel within $C_i$ as its primary channel, and all the other channels are considered as secondary. The primary channel has a distinct role during both backoff and transmission-channel selection, which will become clear in the following subsections. 

\subsubsection{Backoff Procedure}

When a node in \WLANi has a packet ready for transmission, it listens to the WLAN's primary channel. Once the channel has been detected free, the node starts the backoff procedure by selecting a random initial value for the continuous backoff timer, chosen according to a given distribution of mean $1/\lambda$.\footnote{For simplicity of notation, throughout the paper, we focus on a homogeneous case where all WLANs have the same average backoff duration $1/\lambda$. However, our analysis can be applied to heterogeneous cases, by simply replacing $\lambda$ with $\lambda_i$ for each \WLANi.} The node then starts decreasing the backoff value linearly with time, while sensing the primary channel. Whenever a transmission, either from other nodes within the same WLAN or from those belonging to other WLANs, is detected on the primary channel, the countdown\footnote{With a slight abuse of terminology, we will use the word countdown to refer to the continuous decrease of the backoff value.} will be paused until the channel is detected free again, at which point the countdown is resumed. When the backoff timer reaches zero, the node selects a channel for transmission based on the particular channel access scheme under which it is operating (explained in Section \ref{sec:tx_ch_selection}) and starts transmitting.

Once the transmission is finished, the node waits for a predetermined duration to receive an acknowledgement of the proper reception of the transmitted packet. If either the packet or its acknowledgement is corrupted by noise, no acknowledgement is received and the packet has to be retransmitted by starting a new backoff procedure. Otherwise, the entire procedure is repeated for transmitting the next packet. Note that we assume the propagation delay to be negligible. This means that a collision can happen only if two WLANs start transmitting at the same time, which happens with probability zero due to the continuous-time nature of the backoff time. We discuss the implications of this assumption in more details in Section \ref{Sec:Assumptions}.

\subsubsection{Transmission Channel Selection}\label{sec:tx_ch_selection}

At every transmission opportunity of a node in \WLANi, the channel on which the packet will be transmitted is selected based on the status of the basic channels in $C_i$, which are sensed just before the backoff timer reaches zero. The selected channel for transmission, denoted by $c_i$, is the largest contiguous subset of these available channels that contains the primary channel of \WLANi and furthermore belongs to $\Cs$, the set of allowed channels imposed by the particular channel access scheme in use. If there are more than one such largest channels available, one of them will be selected at random. We denote the number of basic channels in $c_i$ by $n_i\leq N_i$. Note that the values of $c_i$ and $n_i$ change at each transmission, whereas $C_i$ and $N_i$ are fixed for all transmissions.

The expected duration of the transmission of a packet, denoted by $1/\mu_{n_i}$, is a function of $n_i$, the number of basic channels used for its transmission. The wider the channel, the shorter the expected transmission duration. The exact relationship between $\mu_{n_i}$ and $n_i$ will depend on the specific technology. In the case of IEEE 802.11ac, this relationship is described in Section \ref{Sec:11acResults}.

\subsection{Two Specific DCB Channelization Schemes}

In this section, we introduce two specific DCB channelization schemes. The first one is used in the IEEE 802.11ac amendment, to which we will refer hereafter as \textit{11acDCB}. The second one is the DCB scheme we use in our examples and demonstrations throughout the paper. We refer to this second scheme as \textit{Powers of Two DCB} or simply \textit{P2DCB}. Both schemes can be explained by simply defining the set of allowed channels $\Cs$.

\subsubsection{11acDCB}\label{Sec:11ac_channelization}

In 11acDCB, $\Cs$ contains only those contiguous subsets of $C$ that are composed of $n=2^k$ basic channels, for some integer $k\leq \log_2 N$, and that their rightmost basic channels fall on multiples of $n$. Figure \ref{Fig:11ac_DCB_channelization} shows $\Cs$ for a scenario where $N = 8$. It contains all the $15$ possible channels containing $\{1,2,4,8\}$ basic channels that also comply with the position limitation described above. Note that, for example, a channel composed of basic channels $\{2,3,4,5\}$ does not belong to $\Cs$ and therefore, is not allowed in the IEEE 802.11ac channelization. 

\begin{figure}[h!]
\centering
\epsfig{file=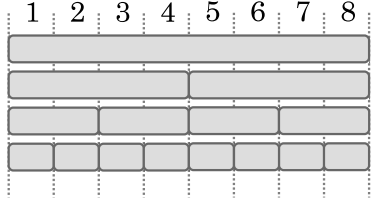,scale=0.6,angle=0}
\caption{The set of allowed channels $\Cs$ in IEEE 802.11ac amendment.}\label{Fig:11ac_DCB_channelization}
\end{figure}

\begin{figure*}[t!]
\centering
\epsfig{file=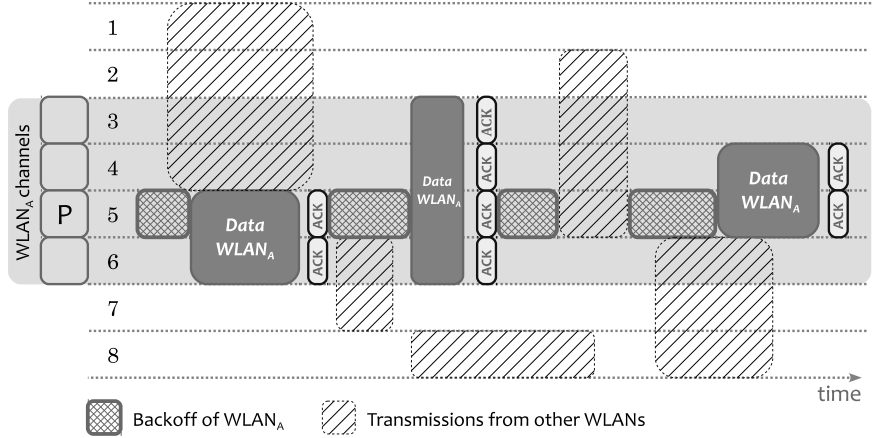,width=0.7\textwidth,angle=0}\label{Fig:ChannelBondingSimple}\\
\caption{Temporal evolution of a WLAN (\WLAN{A}) operating under P2DCB. P indicates the primary channel of \WLAN{A}.}\label{Fig:Schemes}
\end{figure*}

The IEEE 802.11ac amendment also defines a static channel bonding scheme, which is not the focus of this paper, but can nevertheless be analyzed using our analytical model. We refer to this scheme as \textit{11acSCB} and briefly explain it here for completeness. In 11acSCB, the channel access mechanism works exactly as described in our system model, except that the transmission channel $c_i$ can only take a single value $c_i=C_i$. In other words, after the backoff, a node can only transmit if the entire channel assigned to the corresponding WLAN is sensed free. Otherwise, it has to start a new backoff procedure, even if some smaller channel is available. This scheme is clearly inferior compared to the dynamic approach in terms of performance, as also evidenced in Section \ref{Sec:11acResults}.

\subsubsection{P2DCB}\label{Sec:P2DCB}

The P2DCB scheme is simply a DCB scheme with a $\Cs$ that contains all channels $c$ with lengths $n = 2^k$, for some integer $k\leq \log_2 N$. It is similar to 11acDCB in that it only allows for contiguous channels whose length is a power of two, in terms of the number of basic channels they contain. But unlike 11acDCB, there is no limitation on the position of the channel.

As mentioned before, our analytical model is not based on this specific scheme and can be applied to any scheme fitting the generalized DCB description in Section \ref{Sec:GenDCB}. However, we have chosen P2DCB scheme for demonstration purposes due to its similarity to 11acDCB. This similarity enables us to recycle the physical layer channelization parameters in IEEE 802.11ac when calculating the transmission duration for different channel sizes, which results in a fairer comparison of the two schemes.

Figure~\ref{Fig:Schemes} shows the operation of P2DCB scheme. Note that the transmission duration is usually much longer than the backoff duration. However, in order to leave room for clear labeling, the figure is not drawn to scale.



\section{Analysis of DCB Networks} \label{Sec:Analysis}

In this section we put forward a thorough analysis of the behavior of the system described in the previous section. We start with a toy scenario to demonstrate the main concepts and challenges more clearly. Then in Section \ref{Sec:GenAnalysis}, we present a general analysis. Our analysis makes use of continuous-time Markov chains and therefore, requires the backoff and transmission times to be exponentially distributed, which we assume to be the case throughout this section. However, as we will see in Section \ref{Sec:Sensitivity}, DCB networks show little sensitivity to the backoff and transmission time distributions, and therefore, the analytical results obtained using the exponential assumption offer a good approximation for other distributions as well.



\subsection{Toy Example}\label{Sec:ToyExample}

Consider a network consisting of two WLANs, A and B, operating under P2DCB. There are a total of four basic channels available (i.e., $N=4$), and the set of valid channel lengths is $\{1,2,4\}$. The operational channels assigned to \WLAN{A} and \WLAN{B}, $C_\text{A}$ and $C_\text{B}$, are shown in Figure \ref{Fig:ExampleDCB}. \WLAN{A} uses four basic channels, $C_\text{A} = \{1,2,3,4\}$, with channel $2$ assigned as its primary channel. \WLAN{B} only uses channels $3$ and $4$ (i.e., $C_\text{A} = \{3,4\}$ and $N_\text{B} = 2$) and has channel $3$ set as its primary. The primary channels are marked by letter $P$ in the figure. 

We are interested in modeling the overall behavior of the WLANs. The two WLANs in this example can be transmitting simultaneously as long as their transmission channels do not overlap, i.e., $c_\text{A} \cap c_\text{B} = \emptyset$. 

\begin{figure}[h!]
\centering
\epsfig{file=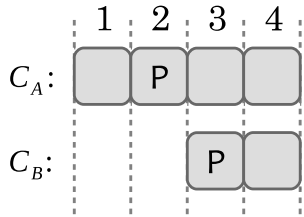,scale=0.75,angle=0}
\caption{Toy example channel assignment}\label{Fig:ExampleDCB}
\end{figure}

\subsubsection{Feasible States}

We define the state of the system at any given time as the set of channels on which each WLAN is transmitting at that time. For the example in Figure \ref{Fig:ExampleDCB}, the set of all feasible states is $\Ss = \{\emptyset, A_4^1, A_2^1, B_2^3, A_2^1 B_2^3\}$, where $\emptyset$ denotes the empty state (when there are no transmissions), and $X_n^j, ~X\in\{A,B\}$, is a shorthand notation that means $c_\Xt = \{j, \ldots, i+n-1\}$, i.e., there is transmission from \WLAN{X} on a channel containing $n$ basic channels, with basic channel $j$ as its leftmost sub-channel. So, for example, when the system is in state $A_2^1 B_2^3$, WLANs A and B are simultaneously transmitting on channels $c_\text{A} = \{1,2\}$ and $c_\text{B}=\{3,4\}$, respectively. Note that there are several other states that may seem feasible at first, but in fact they are not reachable due to the fact that according to DCB, the selected channel for transmission is always the largest available channel in $\Cs$. For example, $A_2^2$ can never be reached because whenever \WLAN{A} finds channel 
$3$ available, \WLAN{B} cannot be transmitting and therefore, channel $4$ will also be available. Since channel $1$ is always available to \WLAN{A}, we can conclude that whenever $A_2^2$ is possible, so is $A_4^1$. So state $A_2^2$ will never be reached.

\subsubsection{Markov Chain Model}
With exponential backoff and transmission times, the aforementioned states form a Continuous-Time Markov Chain (CTMC), as shown in Figure \ref{Fig:ExampleDCB_MC}. To avoid cluttering in the figure, the forward and backward transitions between two states are represented with a single double-sided arrow. The transition rates are indicated by an ordered pair, with the left and right elements corresponding to forward and backward transition rates, respectively. For example, transitions from the empty state to $A_4^1$ happen at rate $U_\text{A}\lambda$, and the transitions in the opposite direction, at rate $\mu_4$. Recall that, $1/\lambda$ is the expected active countdown duration, i.e., excluding the pauses, and is assumed to be the same for all nodes across different WLANs. $U_\Xt$ is the number of nodes in \WLAN{X}. Therefore, the rate at which the collection of nodes in \WLAN{A} access the channel is $U_\text{A}\lambda$. Also, $1/{\mu_{n}}$ is the expected time required for transmission of a packet over a channel consisting of $n$ basic channels. Therefore, backward transitions out of a given state $X_n^j$ happen at rate $\mu_n$.

\begin{figure*}[hhhh!]
\centering
\subfigure[CTMC with transition rates]{\epsfig{file=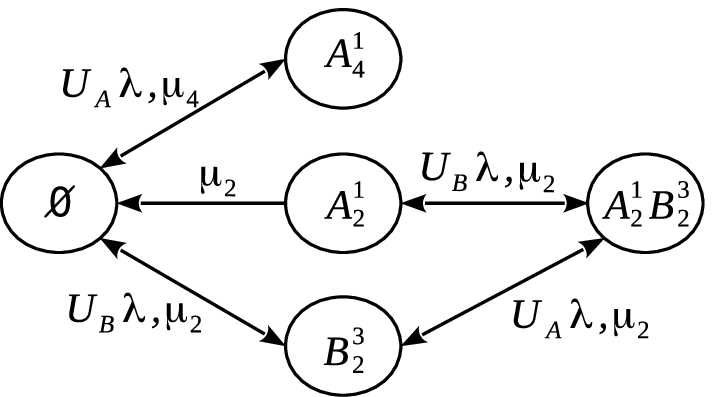,scale=0.9,angle=0}
\label{Fig:ExampleDCB_MC}}
\hspace{1cm}
\subfigure[Order of stat discovery]{\epsfig{file=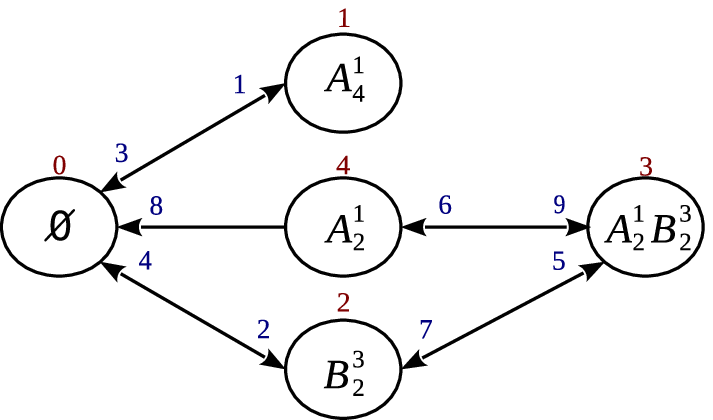,scale=0.9,angle=0}\label{Fig:ExampleDCB_MC_discovery_order}}
\caption{Markov chain corresponding to the toy example in Figure \ref{Fig:4WLAN_Ex_Channels}}
\end{figure*}


There are a few important observations to be made here:
\begin{itemize}
\item Due to the continuous nature of the backoff and transmission times, the probability that two WLANs start or stop transmitting exactly at the same time is zero. Therefore, transitions in the Markov chain can only happen between states which differ by only one WLAN participation.
\item Since always the channel in $\Cs$ with the largest length available is chosen for transmission, some seemingly feasible states are in fact not reachable. 
\item Transitions between two feasible states may in some cases be possible only in one direction. This can be observed for states $\emptyset$ and $A_2^1$, where a forward transition is not valid. This is because at the empty state, all channels are available and if \WLAN{A} is the first one to finish its backoff, then it can use all four available channels for transmission, hence the next state will be $A_4^1$.
\end{itemize}
 
As it can be seen, finding the set of all feasible states and the transitions between them is not trivial. In Section \ref{Sec:MC_construction}, we will propose an algorithm for constructing the corresponding CTMC for any given network of WLANs.


\subsection{General Analysis}\label{Sec:GenAnalysis}


For a system consisting of $M$ WLANs, the state of the system at any given instant can be described by $s = (c_1, \ldots, c_M)$,\footnote{Applying this notation to the toy example, states $A_2^1 B_3^3$ and $A_4^1$ would be equivalent to $s = (\{1,2\},\{3,4\})$ and $s = (\{1,2,3,4\},\emptyset)$, respectively. We will use the two notations interchangeably hereafter.} which indicates the WLANs that are transmitting at that instant and the channels over which they are transmitting. However, as we saw with the toy example, not all states consisting of non-overlapping active channels are in fact reachable. Furthermore, once the reachable states are found, determining valid transitions in the chain may not be trivial. Therefore, we dedicate the next subsection to devising an algorithm that can construct the CTMC corresponding to any given scenario. Once the CTMC is constructed, the stationary distribution of the chain can be calculated and used to calculate different performance metrics of interest, such as throughput and fairness, as done in Section \ref{Sec:Metrics}.

\subsubsection{Constructing the Markov Chain}
\label{Sec:MC_construction}

In order to come up with an algorithm for constructing the CTMC, we first need to understand the basic rules that govern the behavior of a DCB network of multiple WLANs. These basic rules can be summarized as follows:

\begin{enumerate}[{Rule }1.]
	\item The channel selection happens at the beginning of every transmission of every WLAN, and the selected channel will be used until the end of that transmission. 		
	\label{Rule:tx_channel_use}

	\item The selected transmission channel of a given WLAN should always belong to $\Cs$ and contain the WLAN's primary channel. 
	\label{Rule:valid_chan}
	
	\item Two or more WLANs can be transmitting simultaneously only if their transmission channels do not overlap. 
	\label{Rule:no_overlap}
	
	\item When a WLAN wants to start transmitting, it always chooses the largest available channel in compliance with Rule \ref{Rule:valid_chan}. If there are multiple such options, one of them is chosen at random, with equal probabilities. 
	\label{Rule:largest_chan}
	
	\item The probability that two or more WLANs simultaneously finish their backoffs (or equivalently, start their transmissions) is zero.
	\label{Rule:no_simul_start_tx}
\end{enumerate}

Based on these rules, we will now present a few simple rules on the construction of the CTMC for a network of DCB WLANs, in the form of propositions. Let $\Ss$ be the set of all feasible states. Consider two states $s$ and $s' \in \Ss$, such that in $s'$ there is exactly one more WLAN than $s$ transmitting. With a slight abuse of notation, we say $s - s' = X_n^j$, for some \WLAN{X}, to indicate the WLAN that is transmitting in $s'$ but not in $s$ and the channels it is occupying. We always draw the CTMC so that the states with the same number of WLANs participating are grouped into the same column, and those with a larger number of WLAN participation are to the right of those with fewer WLANs. This way, transitions from $s$ to $s'$ and vice versa can be referred to as forward ($s \rightarrow s'$) and backward ($s \leftarrow s'$) transitions, respectively.

\begin{proposition}
Not all valid states in compliance with Rules \ref{Rule:valid_chan} and \ref{Rule:no_overlap} are necessarily reachable.
\end{proposition}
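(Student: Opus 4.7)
The plan is to establish the proposition by exhibiting an explicit counterexample, since the statement is of the form ``there exists a state satisfying Rules \ref{Rule:valid_chan} and \ref{Rule:no_overlap} that is not reachable.'' A single witness suffices, and the toy example of Section \ref{Sec:ToyExample} already provides a natural candidate, namely the state in which \WLAN{A} transmits on $c_\text{A}=\{2,3\}$ (using the notation of the toy example, this would be $A_2^2$), while \WLAN{B} is idle.

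First I would verify that this candidate state complies with Rule \ref{Rule:valid_chan} and Rule \ref{Rule:no_overlap}. Compliance with Rule \ref{Rule:valid_chan} amounts to checking that $\{2,3\} \in \Cs$ under P2DCB (it has length $2 = 2^1$) and that it contains the primary channel of \WLAN{A}, namely basic channel $2$; both are immediate. Compliance with Rule \ref{Rule:no_overlap} is trivial since only one WLAN is transmitting.

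Next I would show that this state is not reachable in the CTMC. The argument leverages Rule \ref{Rule:largest_chan}: for \WLAN{A} to enter transmission on $\{2,3\}$, its backoff timer must have reached zero at some instant at which basic channels $2$ and $3$ were sensed free. But since \WLAN{B} only uses basic channels $\{3,4\}$, the fact that channel $3$ is free implies that \WLAN{B} is not transmitting, which in turn implies that channel $4$ is also free. Channel $1$ lies outside $C_\text{B}$, so it is always free from \WLAN{B}'s perspective and, in a two-WLAN system, is free whenever \WLAN{A} itself is about to begin a transmission. Hence at the decision instant the entire channel $\{1,2,3,4\}$ is available, and since this is a valid channel in $\Cs$ of length strictly greater than $2$, Rule \ref{Rule:largest_chan} forces \WLAN{A} to select $\{1,2,3,4\}$ rather than $\{2,3\}$. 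Therefore $A_2^2$ can never be entered from any predecessor state, and the chain cannot visit it.

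I do not foresee a real obstacle here: the proposition is essentially a ``for-instance'' observation designed to motivate the construction algorithm in Section \ref{Sec:MC_construction}, and the only subtlety is being careful that the counterexample is truly unreachable from \emph{every} predecessor state, not merely from the empty state. The cleanest way to handle this is to observe that any transition into $A_2^2$ must be a forward transition that corresponds to \WLAN{A} initiating a transmission (by Rule \ref{Rule:tx_channel_use}, once a transmission is in progress its channel does not change), and then apply the argument above to the configuration of the other WLANs at that instant.
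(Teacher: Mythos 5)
Your proposal is correct and follows essentially the same route as the paper: both exhibit the state $A_2^2$ from the toy example of Section \ref{Sec:ToyExample} as a witness, and both argue that whenever basic channel $3$ is free \WLAN{B} must be idle, so channels $1$--$4$ are all available and Rule \ref{Rule:largest_chan} forces \WLAN{A} to select $A_4^1$ instead. Your additional care in checking unreachability from every predecessor state (not just the empty state) is a nice tightening of the paper's more informal argument, but does not change the substance.
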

\begin{proof}
A valid state may not be reachable, mainly due to the fact that nodes have to choose the largest available channel in $\Cs$ (Rule \ref{Rule:largest_chan}). The toy example in the previous section can serve as a counterexample to prove this proposition.
\end{proof}

\begin{proposition}
Transitions can only happen between states which differ by only one WLAN participation.
\end{proposition}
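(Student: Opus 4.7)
The plan is to reduce this proposition to a direct consequence of Rule~\ref{Rule:no_simul_start_tx} together with the continuous-time nature of the transmission durations. The key observation is that any change in the system state corresponds to an event of one of two types: either a WLAN that was not transmitting begins a transmission (a WLAN is ``added'' to the state), or a WLAN that was transmitting finishes its transmission (a WLAN is ``removed'' from the state). A transition between two states whose symmetric difference contains $k$ WLANs would therefore require $k$ such events to occur at the same instant.

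First I would argue that two ``start-of-transmission'' events cannot coincide: this is exactly Rule~\ref{Rule:no_simul_start_tx}, which holds because the backoff timers are drawn from continuous distributions (of mean $1/\lambda$) and therefore the probability that two timers expire at the same instant is zero. Next I would handle two ``end-of-transmission'' events: since transmission durations are exponentially distributed and independent across WLANs, the probability that two transmissions end at the same instant is also zero. Finally, a mixed event (a start and an end coinciding) is ruled out by the same continuous-distribution argument, since the expiration time of a backoff and the completion time of an ongoing transmission are independent continuous random variables.

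Combining these three cases, the probability that more than one event occurs at any given instant is zero, so almost surely every transition in the sample path changes the participation of exactly one WLAN. Consequently, the CTMC has non-zero transition rates only between states $s$ and $s'$ whose symmetric difference in WLAN participation is exactly one, which is the content of the proposition.

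I do not expect any real obstacle here: the statement is essentially a restatement of Rule~\ref{Rule:no_simul_start_tx} extended to transmission-completion events, and the proof rests entirely on the continuity of the backoff and transmission-time distributions already assumed in the model. The only subtlety worth mentioning is that the argument relies on independence and continuity of these random variables, so the proposition as written is implicitly tied to the exponential-time assumption made in Section~\ref{Sec:Analysis}; this could be flagged briefly in the proof for completeness.
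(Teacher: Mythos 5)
Your argument is sound in spirit and, on the simultaneity side, is actually more thorough than the paper's: the paper's proof only invokes Rule~\ref{Rule:no_simul_start_tx} (simultaneous backoff expirations) to rule out transitions that differ by more than one WLAN, whereas you explicitly handle simultaneous transmission endings and mixed start/end coincidences, which Rule~\ref{Rule:no_simul_start_tx} does not literally cover. However, you have silently absorbed one case that the paper treats explicitly and that your event decomposition does not justify on its own. The paper's proof distinguishes a third violating transition type: one between states with \emph{identical} WLAN participation but different transmission channels, e.g., $A_2^1 \rightarrow A_4^1$ in the toy example, which would occur if a WLAN could widen or change its channel mid-transmission. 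Your opening claim that ``any change in the system state corresponds to an event of one of two types'' (a start or an end) is precisely the assertion that such mid-transmission channel changes are impossible, and that assertion is not a consequence of continuity of the timing distributions --- it is Rule~\ref{Rule:tx_channel_use} (the channel is selected at the start of a transmission and used until its end). You should cite that rule explicitly where you enumerate the possible event types; with that one-line addition your proof covers all three of the paper's cases. One further small correction: the proposition does not hinge on exponentiality, only on the timing distributions being atomless (and the relevant random variables independent), which your own phrasing about ``continuity'' already suggests; flagging it as tied to the exponential assumption overstates the dependence.
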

\begin{proof}
There are three kinds of transitions that would violate the statement of this proposition: ($i$) those between states that have exactly the same WLANs participating, but with different channels (e.g., $A_2^1 \rightarrow A_4^1$ in the toy example), which cannot happen because it violates Rule \ref{Rule:tx_channel_use}; ($ii$) those between states which differ by more than one WLAN participation (e.g., $\emptyset \rightarrow A_2^1 B_2^3$), which cannot happen due to Rule \ref{Rule:no_simul_start_tx}; and ($iii$) those involving a combination of ($i$) and ($ii$).
\end{proof}

\begin{proposition} \label{Prop:backward_trans}
For any two feasible states, $s, s' \in \Ss$, such that $s' - s = X_n^j$, there is always a transition from $s'$ to $s$, i.e., backward transitions of the form $s\leftarrow s'$ are always feasible. 
\end{proposition}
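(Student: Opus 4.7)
The plan is to show that any backward transition corresponds to a single WLAN simply completing its ongoing transmission, and that such a completion is always physically admissible regardless of what the surrounding WLANs are doing. Concretely, given $s, s' \in \Ss$ with $s' - s = X_n^j$, I would verify that the event ``WLAN X finishes its transmission while the remaining WLANs continue undisturbed'' takes the system from $s'$ directly to $s$, at a strictly positive rate.

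First, I would use the fact that in state $s'$, WLAN X is actively transmitting on the channel $X_n^j$ (this is what $s' - s = X_n^j$ encodes). Since transmission times are exponentially distributed with mean $1/\mu_n$, at every instant there is a positive instantaneous rate $\mu_n$ at which X's transmission terminates. Second, I would invoke Rule \ref{Rule:tx_channel_use}: every other WLAN active in $s'$ already chose its channel at the beginning of its own transmission and keeps that channel until its transmission ends; hence the departure of X cannot force any channel reassignment among the remaining WLANs. Third, I would invoke Rule \ref{Rule:no_simul_start_tx} (and its analogue for simultaneous completions, which follows from the same continuous-time argument) to rule out the possibility that another WLAN starts or stops at the very instant X terminates, so that the state reached immediately after X's completion contains exactly the WLANs active in $s'$ minus X, each on the same channel as in $s'$. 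By the definition of $s - s' = X_n^j$, this is precisely $s$.

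Putting these observations together, the transition $s' \to s$ exists and occurs at rate $\mu_n > 0$, which is exactly the backward-transition claim. I do not anticipate a real obstacle here, since feasibility of $s$ is assumed in the hypothesis ($s \in \Ss$) and the argument does not rely on any non-trivial combinatorial property of $\Cs$: unlike forward transitions, backward ones need not choose a ``largest available channel'' (Rule \ref{Rule:largest_chan}) and therefore escape the reachability subtleties highlighted in Proposition~1. The only point that deserves explicit attention in the write-up is making clear why the remaining WLANs' channels are frozen during the transition, which is where Rule~\ref{Rule:tx_channel_use} does all the work.
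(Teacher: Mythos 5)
Your proposal is correct and follows essentially the same route as the paper: the backward transition is simply WLAN X completing its transmission before the other active WLANs, which occurs at positive rate $\mu_n$. Your additional appeals to Rule \ref{Rule:tx_channel_use} (the remaining WLANs' channels are frozen) and to the zero-probability of simultaneous events are sensible elaborations of details the paper leaves implicit, but they do not change the argument.
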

\begin{proof}
Such transitions happen whenever $X_n^j$ finishes its transmission before the other nodes transmitting in $s'$. This event happens with positive probability, since the transmissions of different WLANs in $s'$ start at random times and last for a random amount of time.
\end{proof}

Consider a state $s\in \Ss$, in which \WLAN{X} is not active. Let $\Ss_{s+X}$ be the set of all possible states that differ from $s$ only by the participation of \WLAN{X}, i.e., those of the form $s' = s + X_n^j$ such that $X_n^j$ is in $\Cs \cap C_X$ and has no overlap with the active channels in $s$. The following proposition characterizes the valid forward transitions from such state $s$.

\begin{proposition}\label{Prop:forward_trans}
There is a transition from $s$ to a given state $\hat{s} = s +  X_{\hat{n}}^{\hat{\jmath}} \in \Ss_{s+X}$ only if $\hat{n} = \max\{n~|~s+X_n^j \in \Ss_{s+X}\}$. 
\end{proposition}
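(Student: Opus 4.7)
The plan is to derive this as an almost direct consequence of Rule \ref{Rule:largest_chan}. The key physical picture is that a forward transition $s \to \hat{s}$ with $\hat{s} - s = X_{\hat{n}}^{\hat{\jmath}}$ corresponds to a single event: \WLAN{X}, which is idle in $s$, has its backoff timer hit zero and immediately begins a transmission on channel $X_{\hat{n}}^{\hat{\jmath}}$. So my plan is to characterize exactly which channels \WLAN{X} could legally pick at that instant, and then invoke Rule \ref{Rule:largest_chan}.

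First I would observe that, by Rule \ref{Rule:no_simul_start_tx}, no other WLAN starts or ends a transmission at exactly the same instant, so the set of basic channels occupied at the moment \WLAN{X} chooses its channel is exactly the union of the active transmission channels in $s$. Consequently, a candidate channel $X_n^j$ is physically available to \WLAN{X} if and only if (i) $X_n^j \in \Cs \cap C_X$ (enforced by Rule \ref{Rule:valid_chan}), and (ii) $X_n^j$ does not overlap with any active channel in $s$. By the definition of $\Ss_{s+X}$ given just above the proposition, these are precisely the channels for which $s + X_n^j \in \Ss_{s+X}$.

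Next I would apply Rule \ref{Rule:largest_chan}: \WLAN{X} must pick a candidate channel whose length equals $\max\{n \mid s + X_n^j \in \Ss_{s+X}\}$ (uniformly at random among them if there are ties, though that only affects $\hat{\jmath}$ and not $\hat{n}$). Any transition $s \to \hat{s}$ with $\hat{n}$ strictly less than this maximum therefore has rate zero, which is exactly the stated ``only if'' condition.

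The main, and essentially only, obstacle is to make explicit that $\Ss_{s+X}$ captures precisely the channels physically available to \WLAN{X} in state $s$, rather than a strict subset: in particular, one must confirm that feasibility of the successor state as a physical configuration requires no constraint beyond validity of the channel and non-overlap with $s$. This follows immediately from how $\Ss_{s+X}$ is defined, but stating it carefully avoids a circular appeal to overall reachability of $\hat{s}$ in $\Ss$.
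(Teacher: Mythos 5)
Your proof is correct and takes essentially the same route as the paper, whose own argument simply observes that the proposition is a direct formalization of Rule~\ref{Rule:largest_chan} (with ties broken uniformly at random, which affects only $\hat{\jmath}$ and splits the transition rate). Your version merely spells out more explicitly why $\Ss_{s+X}$ is exactly the set of channels available to \WLAN{X} at the instant its backoff expires in state $s$, which is a harmless elaboration rather than a different approach.
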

\begin{proof}
This is basically a formalization of Rule \ref{Rule:largest_chan}. Note that there may be more than one such states in $\Ss_{s+X}$ with largest $n$, in which case, a right transition from $s$ to each of those states exists, one of which will be selected randomly at the transition instant. Therefore, the transition rate to such states is multiplied by the inverse of the number of such states. 
\end{proof}

%
Based on these propositions, Algorithm \ref{Alg:CTMC_Construction} presents a systematic way to discover all the feasible states and the valid transitions between them for a given network setup.


\begin{algorithm}[ht!!!]
$i = 0$ \; 
$k = 0$ \;
$s_k = \emptyset$ \;
\While{$s_k \in\{s_0,\cdots, s_i\}$}
{
  $s = s_k$ \; \label{Alg:New_Round}
  \For{every WLAN $X$}{
    \If{$\exists ~ n, j$ such that $X_n^j \in s$ \label{AlgLine:BackwardTransition}}
    {
		$s \rightarrow s - X_n^j$ is a new transition\;
		\If{$s - X_n^j\notin \{s_0,\cdots, s_i\}$}
		{
			$i = i+1$\;
			$s_i = s - X_n^j$\;
      }  
    }
    \ElseIf{$\Ss_{s+X} \neq \emptyset$\label{AlgLine:ForwardTransition}}
    {
		$\hat{n} = \max\{n~|~s+X_n^j \in \Ss_{s+X}\}$\;		
		\For{every $\hat{\jmath}$ such that $X_{\hat{n}}^{\hat{\jmath}} \in \Ss_{s+X}$}
		{ 		
			$s \rightarrow s + X_{\hat{n}}^{\hat{\jmath}}$ is a new transition\;
			\If{$s + X_n^j\notin \{s_0,\cdots, s_i\}$}
			{
				$i = i+1$\;
				$s_i = s + X_n^j$\;
    	  	}
    	 }
    }
    
  }
  $k = k+1$\;
} 
\caption{CTMC Construction. $i$ is the index of the last discovered state, and $k$ is the index of the state currently being used for discovery.}
\label{Alg:CTMC_Construction}
\end{algorithm}

In Algorithm \ref{Alg:CTMC_Construction}, states and transitions are discovered one by one, and every discovered state is then used to discover further states. The empty state, which is always a feasible state for any network configuration, is used in the first round of discovery. In every round of the algorithm, a state $s$ is used as the basis for discovering further state. In the algorithm, variable $i$ is the index of the last discovered state, and $k$ is the index of the state currently being used for discovery (i.e., $s = s_k$). 

As per Proposition \ref{Prop:backward_trans}, for every $X_n^j$ that is active in state $s$, a transition is made into state $s_k - X_n^j$, and the state is added to the list of discovered states if it is not already there. This is done through the \textbf{if} statement of line \ref{AlgLine:BackwardTransition} of Algorithm \ref{Alg:CTMC_Construction}.

Similarly, the forward transitions and their corresponding states are discovered in accordance with Proposition \ref{Prop:forward_trans} through the \textbf{else if} statement of line \ref{AlgLine:ForwardTransition} in the Algorithm, which applies to every WLAN $X$ that is not active in $s$. This is done by first finding the largest channel length $\hat{n}$ available in $\Ss_{s+X}$. Then for all $X_{\hat{n}}^{\hat{\jmath}} \in \Ss_{s+X}$ a new transition from $s$ to $s +X_{\hat{n}}^{\hat{\jmath}}$ is created. If the state $s + X_{\hat{n}}^{\hat{\jmath}}$ was not previously discovered, it is added to the end of the list of discovered states. The algorithm ends when all the discovered states are used for discovery.


Figure \ref{Fig:ExampleDCB_MC_discovery_order} shows the CTMC corresponding to the toy example of Section \ref{Sec:ToyExample}, with the discovery order of transitions and states marked according to Algorithm \ref{Alg:CTMC_Construction}. The first and last discovered states are the empty state and $A_2^1$, respectively. The first discovered transition is $\emptyset \rightarrow A_4^1$ and the last one is $A_2^1 \rightarrow A_2^1 B_2^3$.

\subsection{Solving the CTMC}

Since there are a limited number of possible channel combinations, the constructed CTMC will always be finite. Furthermore, it will be irreducible due to the fact that backward transitions between neighboring states are always feasible. Therefore, a steady-state solution to the CTMC always exists. However, due to the possible existence of the one-way transitions between states (e.g., between $\emptyset$ and $A_2^1$ in the toy example), the CTMC is not always time-reversible and the local balance may not hold \cite{kelly2011reversibility}. The steady-state probabilities of the CTMC can be found by solving the general balance equations, which can be represented in matrix form as
\begin{equation}
\piv \Qmat = 0
\end{equation}
where $\piv$ is the vector of stationary probabilities of the states of the CTMC, and $\Qmat$ is its rate matrix. For the toy example of Section \ref{Sec:ToyExample}, with the states arranged in their discovery order, $\piv = (\pi_\emptyset, \pi_{A_4^1}, \pi_{B_2^3}, \pi_{A_2^1 B_2^3}, \pi_{A_2^1})$, and $\Qmat$ is given by

\begin{small}
\begin{align}
\Qmat = 
\left(\begin{array}{ccccc}
 	-(U_\At + U_\Bt)\lambda & U_\At\lambda & U_\Bt\lambda & 0 & 0\\
    \mu_4     & -\mu_4  & 0 & 0 & 0 \\
	\mu_2 & 0 & 0  & U_\Bt \lambda & -(\mu_2+U_\Bt\lambda)\\
    \mu_2 & 0 & -(\mu_2+U_\At\lambda) & U_\At\lambda & 0 \\
    0 & 0 & \mu_2  & -2\mu_2 & \mu_2\\                    
\end{array}\right) \nonumber
\end{align}
\end{small}


\subsection{Performance Metrics} \label{Sec:Metrics}

Using the stationary distribution, different performance metrics of interest can be found, as detailed in what follows.

\begin{itemize}
    \item \textbf{Throughput:} For a given state $s$, $\pi_s$ is the fraction of time that the network spends in $s$. Let $n_{(X,s)}$ be the channel width that a \WLAN{X}  is using in state $s$, i.e.,
\begin{align}
n_{(X,s)} = \left\{\begin{array}{ll}
n, & X_n^j \in s ~\text{for some $j$}\\
0, & \text{otherwise}\\
\end{array}\right.
\end{align}     
The throughput of \WLAN{X}, in bits per second, is then given by

\begin{align}\label{Eq:WLANThroughput}
    \Gamma_X = L_X \left(\sum_{s \in \Ss}{\mu_{n_{(X,s)}} \pi_s}\right)(1-p_e), 
\end{align}
where $L_X$ is the expected packet length for \WLAN{X}, and $p_e$ is the packet error probability. Furthermore, the network throughput is the sum of the throughputs of all WLANs, and it is given by:

\begin{align}
    \Gamma = \sum_{X}{\Gamma_X}
\end{align}

\item \textbf{Jain's Fairness index}: The Jain's Fairness Index (JFI) with respect to throughput is given by
        \begin{align}\label{Eq:Fairness}
            \field{J} :=\frac{\left(\sum_{i=1}^{M}{\Gamma_i}\right)^2}{\sum_{i=1}^{M}{\Gamma_i}^2}  
        \end{align}        
\end{itemize}



\subsection{Discussion of the Assumptions} \label{Sec:Assumptions}
In this section, we review the assumptions needed for the presented analysis and how they affect the results when our analysis is applied to systems not complying with them.

\subsubsection{The Continuous Backoff Model}

In IEEE 802.11 WLANs, the CSMA/CA procedure uses a slotted backoff. To model CSMA/CA in 802.11, usually Bianchi-like \cite{bianchi2000performance} Markov chain models are used. However, these models work only in scenarios where all participating nodes are able to stay synchronized during their backoff procedure. In systems where the synchronization between different nodes is not possible, e.g., in multihop networks in which some nodes are not in one another's coverage area, continuous-time backoff model is commonly used for modeling the system behavior \cite{laufercapacity}. To do this, the slotted backoff procedure is replaced with a continuous-time backoff with an expected duration equal to that of the slotted system being modeled. These models have been shown to give accurate results in terms of throughput and other performance metrics, matching those obtained through simulations of the slotted version of the system \cite{liew2010back}. 

In our scenario, where multiple neighboring WLANs are operating under DCB scheme, even though nodes are assumed to be within one another's carrier-sensing range, they can still lose synchronization due to the fact that a node only senses its primary channel during backoff, which means that only transmissions on the primary channel result in a pause in the backoff countdown. Therefore, when other nodes transmit on other channels during a target node's backoff, with a slotted backoff, the synchronization between those nodes and the target node would be lost. This is the main motivation for using a continuous-time backoff in our system model.

In Section \ref{Sec:11acResults}, we use our analytical results obtained from the continuous backoff model to characterize the behavior of a CSMA/CA network of IEEE 802.11ac WLANs. By comparing these results to those obtained from simulations, we further validate the accuracy and appropriateness of the continuous backoff models when applied to DCB networks. 


\subsubsection{Zero Propagation Delay}

In our analysis, we assume that there is no propagation delay between the different WLANs in the network. This assumption, together with the continuous backoff assumption, results in a zero collision probability. Without this assumption, we would need to keep track of the WLANs involved in each collision, which would compromise the memoryless property and make the use of Markov chains impossible for our analysis. This can be better explained through a simple example. Consider two WLANs, A and B, operating over the same channel. When A starts a transmission, the system enters state A. However, this transmission can turn into a collision if B starts transmitting within the time it takes for A's signal to reach B (after that, B will not start transmitting because it can already sense A's transmission). If that happens, the system enters state AB, which is the collision state. Otherwise, the system will remain in state A until A's transmission ends and then it goes back to the null state. In the first case, the data that was transmitted while in state A is corrupted, whereas in the second case, it is a successful transmission which counts towards A's throughput. This means that the meaning of state A changes depending on what state follows it. This situation cannot be handled using a Markov chain analysis.

It should also be noted here that, since the WLANs in the network are assumed to be close to each other (all within each other's carrier-sensing range), the propagation delay is quite small and the collision probability is negligibly small. In Section \ref{Sec:11acResults}, we will see that this assumption has little negative impact on the accuracy of the results.

\subsubsection{Saturation}

In our system model, we are assuming that all WLANs are always in saturation condition, i.e., they always have data to transmit. This provides a worst-case scenario in terms of aggressiveness of the WLANs. Our throughput analysis can be easily extended to non-saturated DCB networks by modifying the backoff duration of each WLAN to account for the idle waiting time when the queue is empty, as done in \cite{laufercapacity}. However, in non-saturation conditions, the results are highly dependent on the specific setup, i.e., the traffic load at each WLAN, and it is hard to get intuitions due to an excess of parameters and degrees of freedom. More interesting analysis for non-saturated networks can be obtained by calculating delay and network stability, which is outside the scope of this paper.


\section{Properties of DCB Networks}\label{Sec:Properties}

In this section, we discuss some properties of DCB networks and highlight some of the fundamental differences between DCB networks and the normal single-channel CSMA/CA networks through examples.

In all of the examples presented in this section, the transmission duration when 1, 2, 4 and 8 channels are used is given by $1/\mu_1=12.3$ ms, $1/\mu_2=6.6$ ms, $1/\mu_4=4.6$ ms and $1/\mu_8=3.5$ ms, respectively. In each transmission, $L_d=768$ kbits are transmitted, and the packet error probability is set to $p_e=0.1$. Moreover,  we assume that in each WLAN, only the access point is transmitting (this is equivalent to setting $U_i = 1$ and has no effect on the obtained results).


\subsection{Dominant and Maximal States} \label{Sec:DomMax}


In single-channel CSMA/CA networks, i.e., those in which all nodes operate on a single shared channel with no channel bonding, it is observed \cite{liew2010back} that when backoff times are significantly smaller than transmission times (i.e., $\lambda/\mu_{n} >> 1$ for all feasible channel lengths $n$), the network tends to spend the majority of its time in a small subset of states\footnote{Here a state at any given time is defined by the subset of nodes active at that time.}. In other words, this subset of states have significantly higher steady-state probabilities than the rest of the states. We refer to such states as  \textit{dominant states}. Formally, a state is dominant if its stationary distribution does not vanish as $\lambda/ \mu_{n} \rightarrow \infty$.

The authors in \cite{liew2010back} show that dominant states in single-channel CSMA/CA networks are always those in which the maximum possible number of nodes are simultaneously active, to which we refer as the \textit{maximal states}. Maximal states are easy to pinpoint once the Markov chain of the system is constructed. Since in \cite{liew2010back}, all dominant states are maximal states and vice versa, the maximal states can be used to calculate the throughput of each link, without the need to find the steady-state probabilities, using simple hand computations. 

In the case of DCB networks, a similar dominance behavior is observed, i.e., the system spends the majority of its time in a subset of states. The maximal states are also very easy to find as they are the rightmost states in the constructed Markov chain. However, the dominant states do not always coincide with the maximal states of the chain. In fact, in DCB networks, neither all maximal states are necessarily dominant nor all dominant states maximal.  Therefore, the easy-to-spot maximal states are not useful for finding the dominant states in DCB networks. So even though the dominant states characterize the behavior of the chain and can give intuitions about how the network operates, finding them may require full  calculation of the Markov chain steady-state probabilities, and unlike in single-channel networks, they do not provide a quick method for calculating network performance metrics. 

\begin{figure}[ttth!]
\centering
\epsfig{file=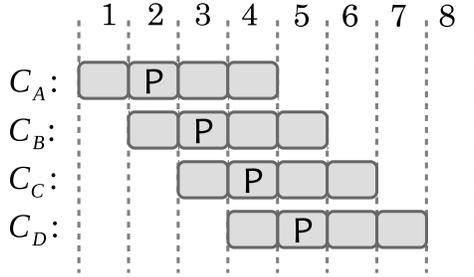
,scale=0.8,angle=0}
\caption{Example with 4 WLANs: channel assignment}\label{Fig:4WLAN_Ex_Channels}
\end{figure}


We show the dominance behavior of DCB networks using an example. Consider a P2DCB network of four WLANs, with channel assignments as shown in Figure \ref{Fig:4WLAN_Ex_Channels}.  The Markov chain corresponding to this network has 49 states as shown in Figure \ref{Fig:4WLAN_Markov}, where the states are depicted by black disks, backward-only transitions by black links, and transitions in both directions by gray links. There are five dominant states in this system, which are those with their state names and corresponding steady-state probabilities printed next to them. The red numbers in parenthesis next to each state indicate their order of discovery and are used for labeling these states in later figures. The dominance of these states can be clearly observed in Figure \ref{Fig:4WLAN_hist}, where the steady-state probabilities of the five dominant states stand out significantly compared to the rest. In fact, the chain spends about $90\%$ of its time in these five dominant states together and the rest of its time in the remaining 44 states. 

\begin{figure}[ttth!]
\centering
\epsfig{file=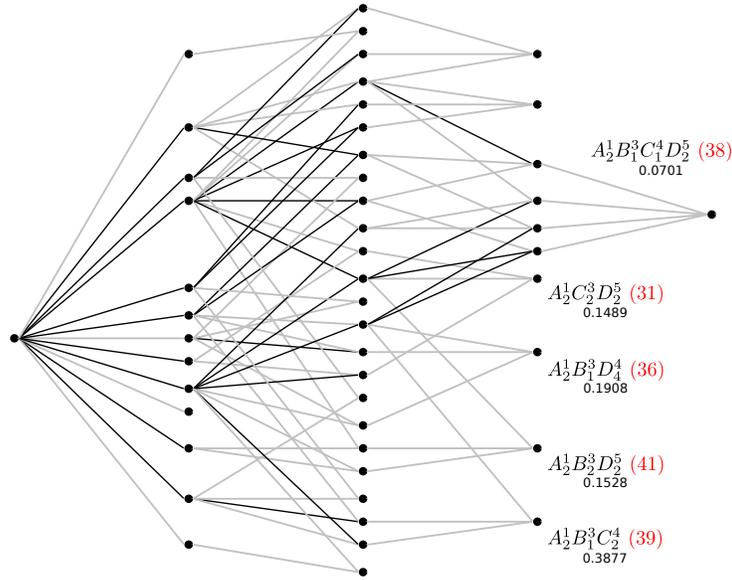,width=0.6\columnwidth,angle=0}
\caption{Example with 4 WLANS (Figure \ref{Fig:4WLAN_Ex_Channels}): Markov chain}\label{Fig:4WLAN_Markov}
\end{figure}

\begin{figure}[ttth!]
\centering
\subfigure[Steady-state probabilities. The states are arranged in their order of discovery.]{\epsfig{file=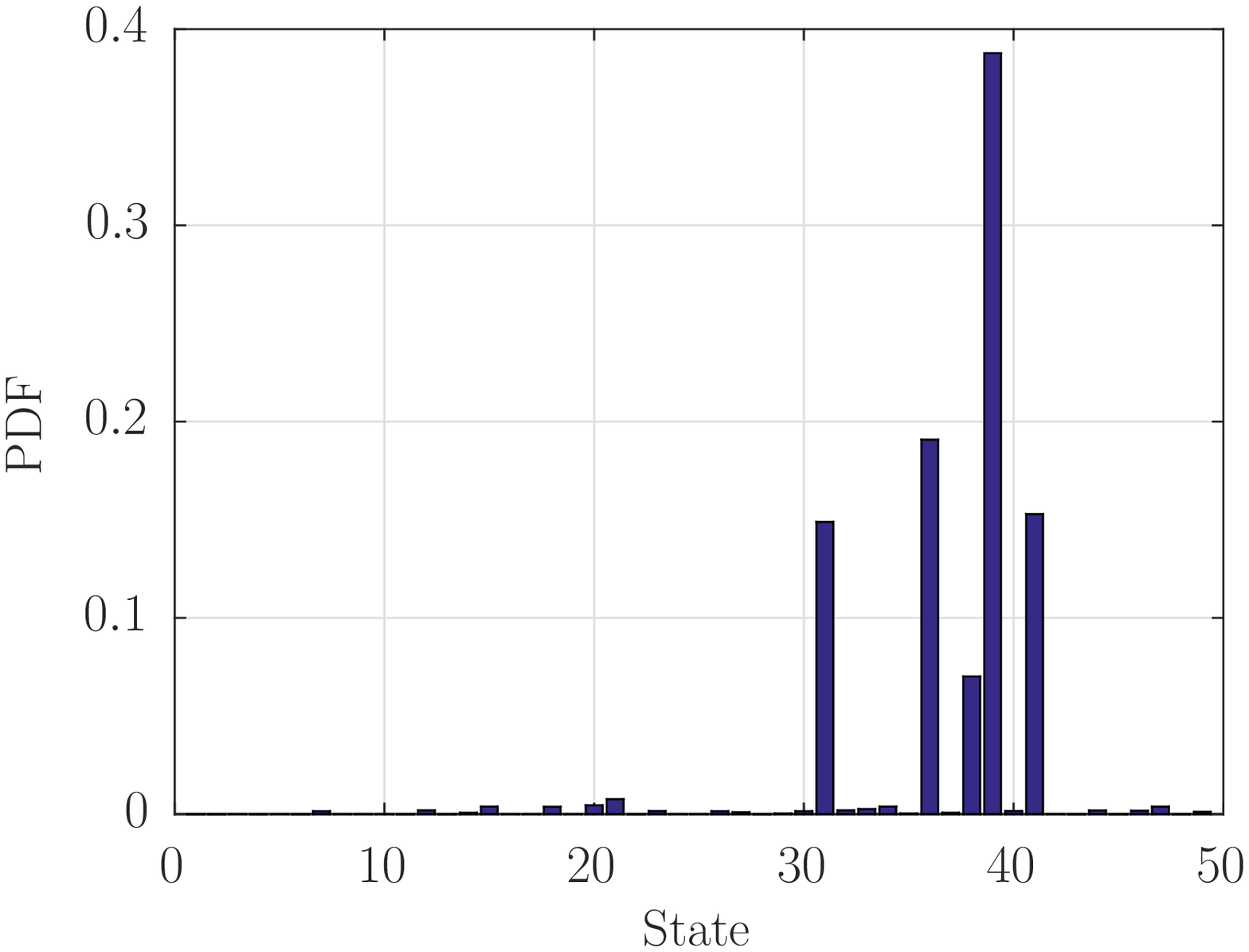,width=0.45\columnwidth,angle=0}
\label{Fig:4WLAN_hist}}
\subfigure[Temporal evolution of the dominant states]{\epsfig{file=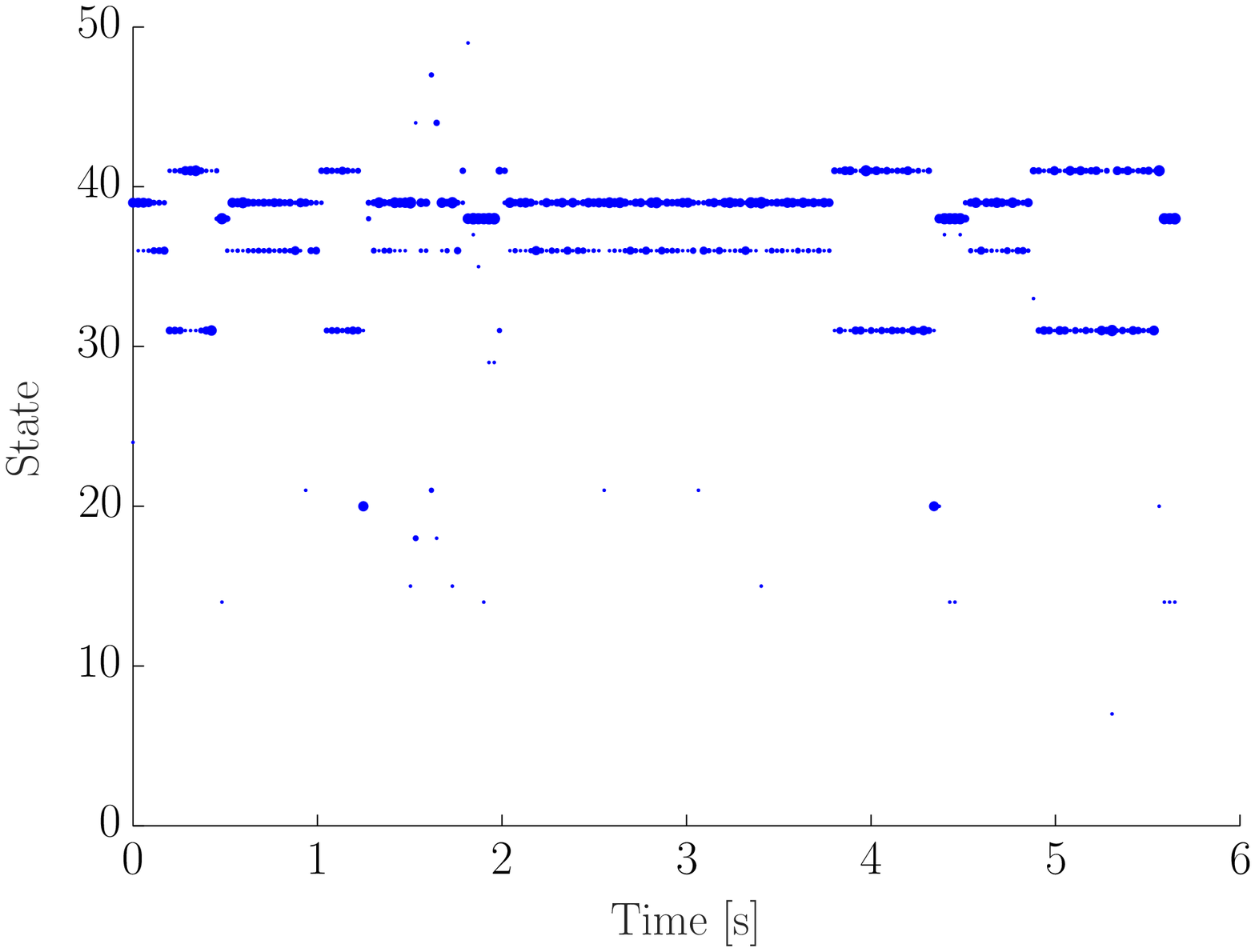,width=0.45\columnwidth,angle=0}
\label{Fig:scatter}}
\caption{Behavior of dominant states in the CTMC of Figure \ref{Fig:4WLAN_Markov} }
\end{figure}

There is a single maximal state in the chain, $A_2^1 B_1^3 C_1^4 D_2^5$, which happens to be the least dominant of the five. However, it is important to observe that all dominant states are \textit{locally maximal} states, namely, states that do not have any neighbor to their right. This observation seems to consistently hold in different scenarios, as long as $\lambda/\mu_{n} >> 1$ for all feasible $n$ values. This behavior can be explained by observing that when  $\lambda/\mu_{n} >> 1$, the right transitions happen at a much faster rate than the left transitions. Therefore, the chain will generally spend the majority of its time in the right half of the chain. Let $s, s'$ be two neighboring states, such that $s\rightarrow s'$ and $s'-s = X_n^j$. Since backoff durations are much shorter than transmission times, it is very likely that while the chain is in state $s$, \WLAN{X} finishes its backoff and the chain enters $s'$ before having spent much time in $s$. Once in $s'$, if any of the other nodes involved in $s'$ finishes its transmission before \WLAN{X}, there will be a backward transition to a state other than $s$. Therefore, the average time spent at $s'$ is generally larger than the time spent in $s$. The locally maximal states enjoy the additional benefit of not having any right neighbors, which means that once entered, the chain will remain there until one of the active nodes finishes its transmission. Note that even though all dominant states are locally maximal, the opposite is not true. 

We summarize the observations in this section as follows:

\begin{observation}
In DCB networks, neither all dominant states are necessarily maximal nor all maximal states are dominant.
\end{observation}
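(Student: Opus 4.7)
The plan is to establish both halves of the observation through explicit counterexamples, since each half is a negative existence statement and cannot be closed by a structural argument alone. The CTMC construction from Algorithm~\ref{Alg:CTMC_Construction} together with the stationary equation $\piv \Qmat = 0$ gives a fully effective procedure for obtaining the steady-state probabilities of any candidate network, so the proof reduces to designing two channel configurations whose CTMCs exhibit the desired pathologies.

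For the first half (some dominant state is not maximal), I would invoke the four-WLAN P2DCB network in Figure~\ref{Fig:4WLAN_Ex_Channels}. First I would run Algorithm~\ref{Alg:CTMC_Construction} to produce the 49-state chain of Figure~\ref{Fig:4WLAN_Markov}, then solve $\piv \Qmat = 0$ numerically at a regime $\lambda/\mu_n \gg 1$. Inspection of the resulting distribution identifies five states whose mass does not vanish as $\lambda/\mu_n \to \infty$, namely the dominant states. Of these, only $A_2^1 B_1^3 C_1^4 D_2^5$ simultaneously activates all four WLANs and is therefore the unique maximal state; the other four dominant states are locally maximal (no right neighbour in the chain) but involve strictly fewer than four active WLANs, which establishes the first claim.

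For the second half (some maximal state is not dominant), I would construct a second configuration with more than one maximal state, engineered so that at least one of them has asymptotically negligible stationary mass. A natural design principle is to couple two competing right-most configurations: one in which every WLAN uses a narrow bonded width (hence a large $\mu_n$ and short sojourn), and another, locally maximal but not maximal, in which fewer WLANs are active but each occupies a wider block of basic channels (small $\mu_n$, long sojourn, and more forward paths leading into it). Because sojourn time at a right-most state scales like $1/\sum_X \mu_{n_{(X,s)}}$, while the rate of entry into it is governed by the topology of incoming forward transitions, a suitable imbalance between these two quantities produces a maximal state whose probability mass is dominated by that of a competing locally maximal state, even in the limit $\lambda/\mu_n \to \infty$.

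The main obstacle is verifying that the constructed maximal state in the second example is truly \emph{not} dominant, rather than merely small at finite $\lambda/\mu_n$. Dominance is an asymptotic property, so I would either carry out a symbolic solution of $\piv \Qmat = 0$ with $\lambda$ kept as a free parameter and read off the leading order as $\lambda \to \infty$, or work with the normalized variable $\rho = \lambda/\mu_n$ and expand the unique solution of the balance equations in powers of $1/\rho$. A computer algebra system makes this tractable for chains on the scale of the four-WLAN example; for a larger construction I would lean on the heuristic developed earlier in the section, that dominant mass concentrates on locally maximal states whose active channels cover the most basic channels, and then confirm the asymptotic vanishing of the targeted maximal state's probability directly from its incoming-forward and outgoing-backward rates.
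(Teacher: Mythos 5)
Your treatment of the first half is exactly the paper's: the four-WLAN P2DCB network of Figure~\ref{Fig:4WLAN_Ex_Channels} yields a 49-state chain with a unique maximal state $A_2^1 B_1^3 C_1^4 D_2^5$ and five states carrying essentially all the stationary mass, four of which are locally maximal but not maximal. That settles ``not all dominant states are maximal.''

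The second half is where the gap lies: you never actually exhibit a maximal state that fails to be dominant. What you give is a design principle plus two candidate verification strategies, and you concede yourself that the asymptotic non-vanishing question is the hard part. Note that the four-WLAN example cannot be reused for this half, since its unique maximal state is one of the five dominant states (the paper calls it ``the least dominant of the five,'' i.e.\ it still carries non-vanishing mass). So a genuinely new configuration with at least two right-most states, one of which loses all its mass as $\lambda/\mu_n \to \infty$, is required, and until one is written down and its balance equations expanded in $1/\rho$, that half of the observation is asserted rather than proved. Two further cautions: your closing fallback --- confirming non-dominance via the heuristic that mass concentrates on locally maximal states covering the most basic channels --- is circular here, since that heuristic is itself only the paper's (unproven) Observation~2; and the scaling variable $\rho=\lambda/\mu_n$ is ill-defined when several channel widths $n$ coexist, so you must fix the ratios $\mu_n/\mu_1$ and let $\lambda$ alone grow. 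In fairness, the paper itself does not supply such an example either --- it states the claim as an empirical observation --- so your sketch is, if anything, more explicit about what a complete argument would require; but as a proof it is incomplete on exactly the half the paper also leaves undemonstrated.
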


\begin{observation}
In DCB networks, all dominant states are locally maximal states, but the opposite is not necessarily true.
\end{observation}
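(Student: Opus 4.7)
The plan is to prove the two halves of the observation separately, working in the regime $\rho := \lambda/\mu_{\max} \to \infty$ in which dominance is defined. Throughout I would use the standard identity for an irreducible finite CTMC with out-rates $q_s$ and embedded-DTMC stationary distribution $p^{\text{emb}}_s$,
\begin{equation*}
\pi_s \;=\; \frac{p^{\text{emb}}_s / q_s}{\sum_{s'\in\Ss} p^{\text{emb}}_{s'} / q_{s'}}.
\end{equation*}

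For the forward direction (dominant $\Rightarrow$ locally maximal), I argue the contrapositive: if $s$ is not locally maximal, then $\pi_s \to 0$. By Proposition~\ref{Prop:forward_trans}, $s$ admits at least one forward transition $s \to \hat s$ whose rate is proportional to $\lambda$, so the total out-rate satisfies $q_s = \Omega(\lambda)$. In contrast, for any locally maximal state $s^*$ all outgoing transitions are backward (guaranteed to exist by Proposition~\ref{Prop:backward_trans}), so $q_{s^*} = O(\mu_{\max})$. The identity above then yields
\begin{equation*}
\frac{\pi_s}{\pi_{s^*}} \;=\; \frac{p^{\text{emb}}_s}{p^{\text{emb}}_{s^*}} \cdot \frac{q_{s^*}}{q_s} \;=\; \frac{p^{\text{emb}}_s}{p^{\text{emb}}_{s^*}} \cdot O(1/\rho),
\end{equation*}
so it suffices to bound $p^{\text{emb}}_s / p^{\text{emb}}_{s^*}$ uniformly in $\rho$ for some particular locally maximal $s^*$ reachable from $s$. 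I would establish this by noting that in the embedded chain, each jump out of a non-locally-maximal state is a forward jump with probability tending to $1$ as $\rho \to \infty$. Hence, starting from $s$, within at most $M$ jumps the chain reaches some locally maximal state $s^*_s$ with probability tending to $1$, forcing $p^{\text{emb}}_s \leq C\, p^{\text{emb}}_{s^*_s}$ for a constant $C$ depending only on $M$ and on the branching width appearing in Proposition~\ref{Prop:forward_trans}. Combined with $\pi_{s^*_s}\le 1$, this yields $\pi_s = O(1/\rho)\to 0$, contradicting dominance.

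For the converse (locally maximal $\not\Rightarrow$ dominant) a single counterexample suffices, and I would extract one directly from the four-WLAN setup of Figure~\ref{Fig:4WLAN_Ex_Channels}. The CTMC of Figure~\ref{Fig:4WLAN_Markov} has $49$ states of which only $5$ are dominant; any locally maximal state not among those $5$ is by definition locally maximal and non-dominant. I would identify such a state by inspection of the algorithmic construction, then numerically solve $\piv\Qmat = 0$ on the resulting $49\times 49$ generator to verify that its stationary probability vanishes as $\rho \to \infty$.

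The hardest step is the uniform boundedness of $p^{\text{emb}}_s / p^{\text{emb}}_{s^*}$, which must carefully account for the tie-breaking branching in Proposition~\ref{Prop:forward_trans} and for the fact that multiple non-locally-maximal predecessors may feed a single locally maximal state. A cleaner route may be a singular-perturbation / slow-fast decomposition: partition the state space by the collection of locally maximal states reachable via the fast (forward) transitions from each state, show that in the $\rho \to \infty$ limit all probability mass leaves non-locally-maximal states instantaneously, and reduce the dominance question to an aggregated chain supported on locally maximal states alone. The converse direction is then a simple observation that this aggregated chain need not be uniform across its locally maximal support.
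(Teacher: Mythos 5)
Your proposal takes a genuinely different route from the paper, for the simple reason that the paper does not prove this statement at all: it is presented as an empirical Observation, backed only by the heuristic that forward transitions occur at rate $\Theta(\lambda)$ while backward transitions occur at rate $O(\mu_n)$, so probability mass drains out of any state that still has a right neighbor; the authors explicitly write that the universal validity of the observation ``is not rigorously proven here'' and that they merely failed to find a counterexample. Your embedded-chain argument is a sound way to make that heuristic rigorous, and its key steps check out: every non-locally-maximal state $s$ has, by Proposition~\ref{Prop:forward_trans}, at least one forward transition of rate $U_X\lambda$ divided by a bounded tie-breaking factor, so $q_s=\Omega(\lambda)$, while a locally maximal $s^*$ has only backward transitions and $q_{s^*}=O(\mu_{\max})$; the uniform bound $p^{\text{emb}}_s\le C\,p^{\text{emb}}_{s^*}$ follows because at most $M$ consecutive forward jumps (each taken with probability $1-O(1/\rho)$, and each strictly increasing the number of active WLANs) land the embedded chain in some locally maximal state, so that $p^{\text{emb}}_{s^*}\ge p^{\text{emb}}_s\,P^k(s,s^*)\ge \delta\,p^{\text{emb}}_s$ for some $k\le M$ and $\delta$ bounded away from zero by finiteness of $\Ss$. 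This upgrades the first half of the observation from conjecture to theorem, which is more than the paper claims. Two caveats are worth recording. First, the limit must be taken with the ratios $\mu_n/\mu_{n'}$ bounded (true in practice, since they are fixed by the PHY parameters), otherwise $q_{s^*}=O(\mu_{\max})$ and ``$\lambda/\mu_n\to\infty$ for all $n$'' are not interchangeable. Second, for the converse half, a numerical solve of $\piv\Qmat=0$ at one parameter value does not by itself establish that the candidate state's probability vanishes in the limit; since the chain is finite you should either solve the balance equations symbolically in $\rho$, or simply note that your own forward-direction machinery does not apply (the candidate is locally maximal) and instead exhibit a locally maximal state whose embedded-chain mass is starved in the limit --- the paper's four-WLAN example (five dominant states out of $49$, with more than five locally maximal states) is indeed the right place to look, and note that the toy example will not work, since both of its locally maximal states ($A_4^1$ and $A_2^1B_2^3$) are dominant.
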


Although the universal validity of Observation 2 it is not rigorously proven here, it is worth mentioning that we have not been able to construct any example in which it does not hold.

\subsection{Temporal Behavior of Dominant States}

Figure \ref{Fig:scatter} shows the temporal behavior of the dominant states in the example described in the previous section (Figures \ref{Fig:4WLAN_Ex_Channels}, \ref{Fig:4WLAN_Markov}, and \ref{Fig:4WLAN_hist}). In this figure, the time axis is divided into $100$ time windows of \SI{60}{ms} each. For each time window, every visited state is represented by a disc proportional to the time spent in that state during that time window. Therefore, each point in the plot is the result of several transitions between neighboring states. In order to avoid cluttering in the plot, only the top three states visited for the longest time are represented for each time window. 


As can be seen in Figure \ref{Fig:scatter}, the dominant states are the ones in which the Markov chain spends the majority of its time. However, switching between dominant states is not completely random and some of the dominant states are highly correlated in time, i.e.,  consecutive switching between these states is much more likely than to other dominant states. This can be observed in Table \ref{Tbl:SwitchProb}, where the pairwise switching probabilities between different dominant states are listed. In this example, there are three groups of correlated dominant states  $\{\{36,39\},\{31,41\},\{38\}\}$, which are also highlighted in Table \ref{Tbl:SwitchProb} using different colors. 

\begin{table}[th]
  \centering
  \begin{small}
 \begin{tabular}{lccccc}
   \toprule
 &  \textbf{\red{31}} & \textbf{\red{41}} & \textbf{\blue{36}} & \textbf{\blue{39}} & \textbf{38} \\
\textbf{\red{31}}&    0 &   \red{0.9642} &    0.0148 &   0.0014 &   0.0196\\
\textbf{\red{41}}&   \red{0.9621} &        0  &  0.0121 &   0.0012 &   0.0246\\
 \textbf{\blue{36}}&    0.0003 &   0.0094 &        0 &   \blue{0.9899} &    0.0004\\
\textbf{\blue{39}}&    0.0012 &    0.0085 &    \blue{0.9902} &         0 &    0.0002\\
\textbf{38}&    0.1545 &    0.1218 &    0.3501 &    0.3735 &         0\\
  \bottomrule
 \end{tabular}
  \end{small}
 \caption{Switching probabilities for dominant states in the Markov chain of Figure \ref{Fig:4WLAN_Markov}.}\label{Tbl:SwitchProb}
\end{table}

It can also be noted in Figure \ref{Fig:scatter} that once the chain enters one of these groups, it stays within that group for a while. This behavior is observed even when the group contains only one state. This is because every dominant state is surrounded by a group of non-dominant states from which transitions are most likely to that dominant state, which is a consequence of the fact that dominant states are locally maximal states.  Table \ref{Tbl:SojRet} shows the median\footnote{We use the median instead of the mean to reduce the effect of rare outliers on the results.} sojourn times and return times for the groups of dominant states in the 4-WLAN example. The sojourn time here is defined as the time between the moment the chain enters one of the dominant states in a given group until the first time it enters a dominant state from another group. Similarly, the return time is the time it takes for the chain to return to a dominant group after it exits that group. We can see in this table that state $38$, which is the least dominant state has a significantly higher return time than the other groups of dominant states. 
\begin{table}[th]
  \centering
  \begin{small}
  \begin{tabular}{lcc}
   \toprule
  {\bf States} & {\bf Sojourn time} & {\bf Return time} \\
  \midrule
  $\{31,41\}$ & 0.2591 & 0.4356 \\
  $\{36,39\}$ & 0.4249 &  0.3556\\
  $\{38\}$ & 0.1509 & 2.0188\\
  \bottomrule
 \end{tabular}
  \end{small}
 \caption{Sojourn and return times for dominant states in the CTMC  of Figure\ref{Fig:4WLAN_Markov}.}\label{Tbl:SojRet}
\end{table}
In terms of system performance, at any given time, the achieved throughput depends on the group of states in which the system is. Since the chain tends to stay in dominant groups for a long time, this may result in long delays for networks that are not present in that dominant group. Nevertheless, for this specific example  severe starvation effects are not observed because each dominant group has all WLANs transmitting in at least one of its dominant states. But it is important to notice that this may not always be the case, and the transient behavior between dominant states needs to be studied to understand the delay properties of the network.

It should be noted here that, to observe the temporal behavior explained above, the time window duration in Figure \ref{Fig:scatter} has to be chosen carefully. If the time window is too long, all dominant states will group together and the end result will only represent the steady-state probabilities of the dominant states, i.e., the same information as in Figure \ref{Fig:4WLAN_hist}, but presented differently. If the time window is too short, then during each time window, only one state will be visited. So the plot will be very cluttered, simply showing a sample path of the Markov chain, which means that the dominance behavior cannot be properly observed. 
In order to choose a reasonable value for the time window, we first calculate the \textit{mixing time} of the Markov chain. The mixing time, $T_\epsilon$, represents the time at which the steady-state distribution is within an $\epsilon$ radius of $\piv$. Roughly speaking, this is the time at which the state of the chain becomes independent of the state in which it started. Following~\cite{goel2006mixing}, we can get an upper bound $\hat{T}_\epsilon$ for the $L^2$ mixing time $T_\epsilon$. Considering the scenario of Figure~\ref{Fig:4WLAN_Ex_Channels}, for $\epsilon=0.001$, $\hat{T}_\epsilon$ is about \SI{1.45}{s}. In order to see a complete evolution of the Markov chain, Figure \ref{Fig:scatter} is plotted for \SI{6}{seconds}, i.e., approximately $4\times\hat{T}_\epsilon$. This time is then divided into 100 time windows.




\subsection{Sensitivity to the Backoff and Transmission Time Distributions} \label{Sec:Sensitivity}

One of the key properties of single-channel CSMA/CA networks with continuous-time backoff is that the stationary distribution of the Markov chain is insensitive to both the backoff and the packet size distributions \cite{liew2010back,van2010insensitivity}. This means that, when analyzing such networks using Markov chain models, even though constructing the Markov chain requires both the backoff and transmission times to be exponentially distributed, the different performance metrics calculated using this Markov chain analysis are valid for any backoff and transmission time distribution with the same expected value. The same result is also observed in SCB networks \cite{bellalta2015interactions}. 

It is shown in \cite{liew2010back}, that the Markov chains corresponding to single-channel CSMA/CA networks are time-reversible, and that this is a sufficient condition for these networks to be insensitive to backoff and transmission time distributions. DCB networks, however, often have Markov chains that are not time-reversible. For example, the presence of one-way transitions in the Markov chain is sufficient for it not being time-reversible. Therefore, the results in \cite{liew2010back} do not apply to DCB networks and, as corroborated by simulations, these networks may show different degrees of sensitivity to backoff and transmission time distributions.

\begin{figure*}[ttth!]
\centering
\subfigure[Throughput of E/E case]{\epsfig{file=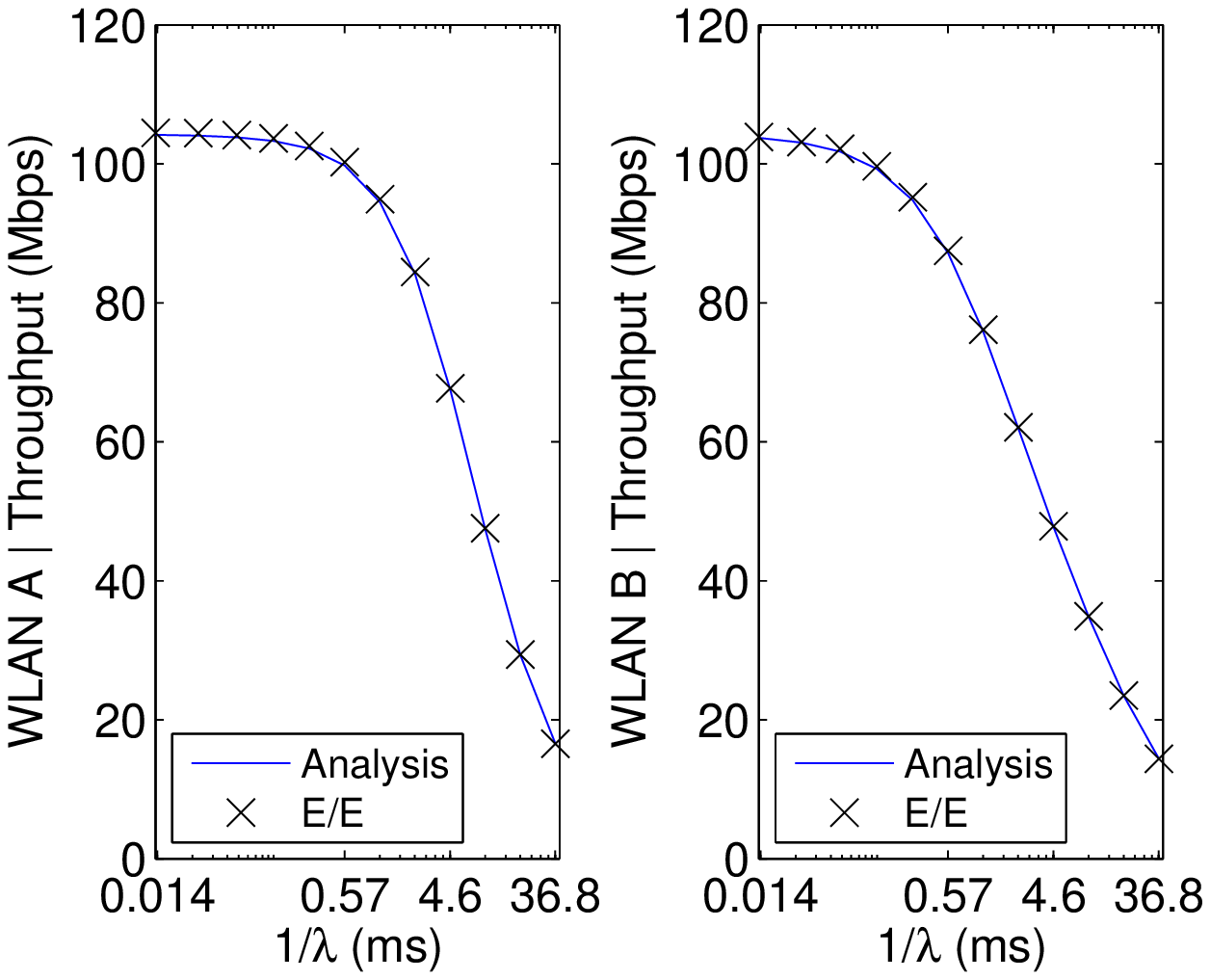,scale=0.35,angle=0}\label{Fig:SensitivityTwoWLANsAB_Throughput}}
\subfigure[Throughput difference with E/E case]{\epsfig{file=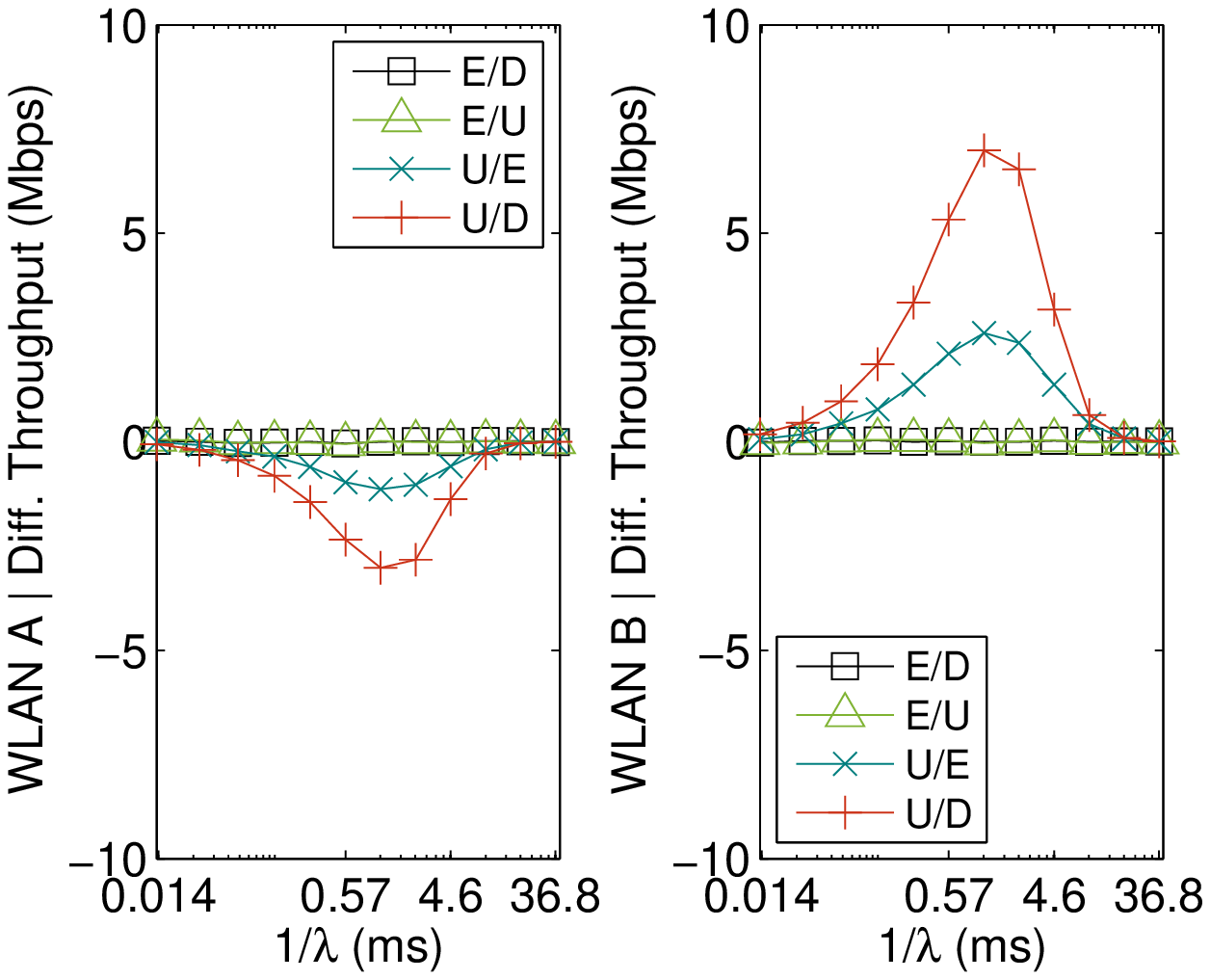,scale=0.35,angle=0}\label{Fig:SensitivityTwoWLANsAB_Dif}}
        \subfigure[Expected channel width of \WLAN{A} (in number of basic channels)]{\epsfig{file=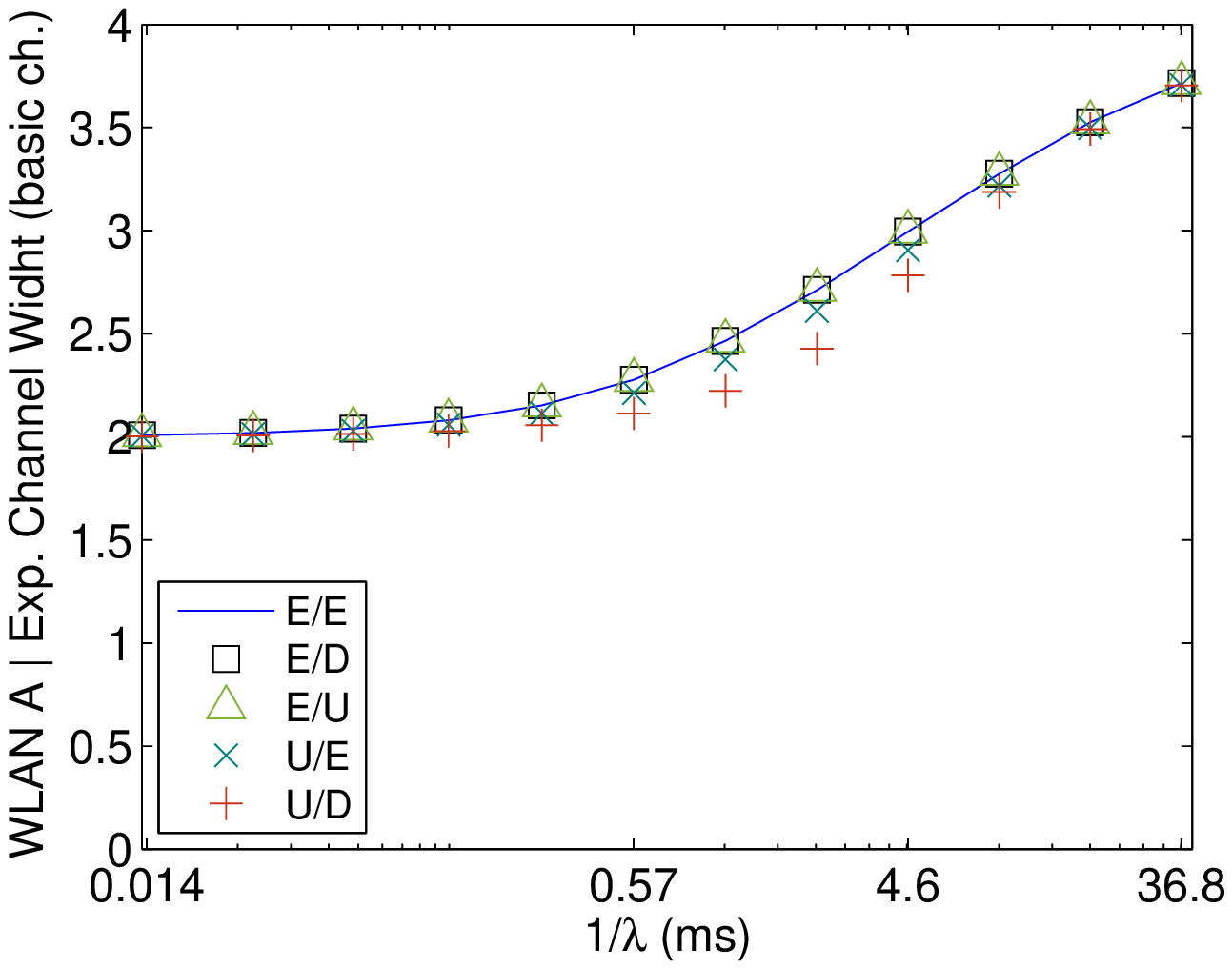,scale=0.35,angle=0}\label{Fig:SensitivityTwoWLANsAB_CWidth}}
\caption{Evaluation of the sensitivity to the backoff and transmission durations of the CTMC of Figure \ref{Fig:ExampleDCB_MC} (toy example).}\label{Fig:Insensitivity}
\end{figure*}


Figure \ref{Fig:Insensitivity} shows the throughput sensitivity of each WLAN in the toy example of Figure \ref{Fig:ExampleDCB} to the backoff and transmission time distributions. In Figure \ref{Fig:SensitivityTwoWLANsAB_Throughput}, the network throughput is plotted for the case where both the backoff and transmission times are exponentially distributed. This serves as a reference for Figure \ref{Fig:SensitivityTwoWLANsAB_Dif}, where the difference between the throughput achieved using different backoff and transmission time distributions and that achieved using exponential distribution is depicted. In the legend, the backoff and transmission time distributions are indicated by their first letter with E, U, and D referring to exponential, uniform, and deterministic, respectively. For example, E/U refers to exponential backoff and uniform transmission time. 

The sensitivity observed in Figure \ref{Fig:SensitivityTwoWLANsAB_Dif} can be explained by the expected channel width of \WLAN{A}, as shown in Figure \ref{Fig:SensitivityTwoWLANsAB_CWidth}. In the case of U/D where the highest sensitivity is observed, \WLAN{A} spends less time using the four basic channels compared to the E/E case, which results in a lower throughput for \WLAN{A}. 

In Section \ref{Sec:11acResults}, we further compare our analytical results to those obtained from simulating DCB networks of IEEE 802.11ac WLANs, in which the backoff and transmission times are not exponential. In that and other cases that we have simulated, the observed sensitivity is relatively small and the analytical results provide a good approximation for the throughput.



\section{Analysis of IEEE 802.11 WLANs} \label{Sec:11acResults}

In this section, we apply our analysis to DCB networks of IEEE 802.11 WLANs. These networks differ from the DCB network model considered in our analysis mainly in the following two aspects:
\begin{enumerate}
\item IEEE 802.11 uses a slotted backoff mechanism.
\item The backoff and transmission times in IEEE 802.11 are not exponentially distributed.
\end{enumerate}
Despite these differences, we will see that our analytical model gives accurate results when applied to DCB networks of IEEE 802.11 WLANs. 

We also compare the channelization used for channel bonding in IEEE 802.11ac amendment, 11acDCB, to P2DCB, the channelization used in all previous examples, and see that the limitations imposed by 11acDCB channelization can have a negative effect on the performance of DCB networks.


The results presented in this section are obtained for a group of neighboring IEEE 802.11 WLANs using DCB. In order to focus only on the interactions between different WLANs, we assume that each WLAN is only transmitting downlink traffic (i.e., only the access point is transmitting). All WLANs operate in the $5$ GHz ISM band, where each basic channel has a width of $20$ MHz. Therefore, the operating channel $C_i$, selected by \WLANi can  have a bandwidth of $\{ 20, 40, 80, 160\}$ MHz. 

In Table \ref{Tbl:Parameters80211ac}, the main parameters used in this section and their values are presented.

\begin{table}[thhhhhhhhhhhhhhh!!]
\centering
\begin{small}
 \begin{tabular}{lcc}
   \toprule
  {\bf Parameter} & {\bf Notation} & {\bf Value} \\
  \midrule
  Packet length & $L_d$ & $12000$ bits \\ 
  Number of aggregated packets & $K_{\text{A}}$ & $64$ packets \\ 
  Contention window & CW & $16$ slots \\ 
  Slot duration & $T_{\text{slot}}$ & $9~\mu$s \\
  Packet Error Probability & $p_e$ & $0.1$ \\ 
  \bottomrule
 \end{tabular}
\end{small}
 \caption{Parameters considered to obtain the results presented in Section \ref{Sec:11acResults}.}\label{Tbl:Parameters80211ac}
\end{table} 

Table \ref{Tbl:RateAdaptation}, shows the modulation $K_\text{m}$ and the coding rate $R$ for each channel width $n$. $\xi(n)$ is the number of data subcarriers when $n$ basic channels are bonded together. The modulation and coding rate are adapted to compensate for the change in signal-to-noise ratio. The values presented in Table \ref{Tbl:RateAdaptation}, are the reference values given by the IEEE 802.11ac amendment to keep the error probability $p_e$ below $10\%$. We use these values to calculate the corresponding packet transmission duration, $\frac{1}{\mu_n}$, for a given channel width, $n$, as indicated in the rightmost column of Table \ref{Tbl:RateAdaptation}, and use a fixed error probability of $p_e = 0.1$ in both the simulation and the analytical calculations. The calculation of $\frac{1}{\mu_n}$ is detailed in Appendix. 

\begin{table}[thhhhhhhhhhhhhhh!!]
\centering
\begin{small}
 \begin{tabular}{lcccc}
  {\bf $n$} & {\bf $\xi(n)$} & {\bf $K_{\text{m}}$} & {\bf \textbf{$R$}} & \bf{$\frac{1}{\mu_n}$}\\
  \midrule
  1 & $52$ & $6$ bits (64-QAM) & $5/6$ & 12.26 ms\\
  2 & $108$ & $6$ bits (64-QAM) & $3/4$ & 6.63 ms\\
  4 & $234$ & $4$ bits (16-QAM) & $3/4$ & 4.64 ms\\
  8 & $468$ & $4$ bits (16-QAM) & $1/2$ & 3.52 ms\\  
  \bottomrule
 \end{tabular}
 \end{small}
 \caption{Modulation and coding rates used for different channel width.}\label{Tbl:RateAdaptation}
\end{table} 


The simulator is based on the COST (Component Oriented Simulation Toolkit) libraries \cite{chen2002reusing}. It accurately reproduces the described scenarios and the operation of each node.


\subsection{Slotted Backoff in IEEE 802.11 WLANs}

\begin{figure}[ht!]
\centering
\epsfig{file=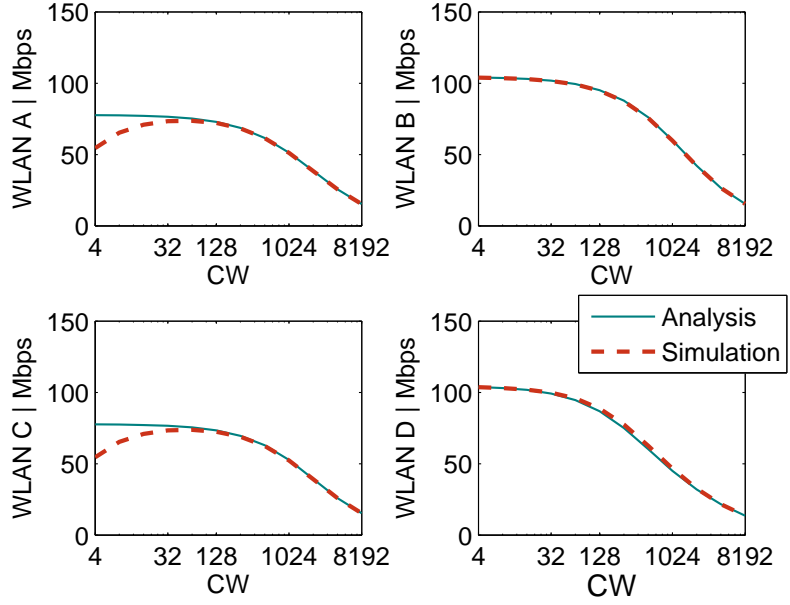, width = 0.7\columnwidth, angle=0}
\caption{Comparative analysis between the throughput obtained from the model with the ones considering the slotted backoff mechanism in IEEE 802.11. The relationship between $\lambda$ and CW is given by $\lambda=\frac{2}{(\text{CW}-1)T_\text{slot}}$.}\label{Fig:ModelValDCF}
\end{figure}

The backoff process in IEEE 802.11 WLANs is implemented using a discrete counter which is decremented every $T_{\text{slot}}$ seconds when the channel is detected  empty. Every node picks an integer backoff value in the range [0,CW-1], and decreases the counter. When the counter reaches zero, the node starts transmitting the packet on the selected channel. When several nodes finish their backoff at the same time, all transmitted packets collide. We assume that all packets involved in a collision are corrupted and lost.

Figure \ref{Fig:ModelValDCF} shows the throughput achieved by a group of $4$ WLANs (A, B, C, and D) using the following initial channel assignment: $C_\text{A} = \{1,\ldots,8\},~ C_\text{B} = \{1,\ldots, 4\},~C_\text{C} = \{5,\ldots, 8\},~C_\text{D} = \{1,2\}$, with their primaries on basic channels $5,3,7$, and $1$, respectively. It can be seen in the figure that for small CW values, the throughput obtained analytically is higher than the one obtained by simulations. This is because in the analytical model, a continuous backoff model is used, which results in a zero probability of collision. With slotted backoff, however, this probability is non-zero. When CW is small, the backoff duration is shorter and therefore the collision probability is higher. This results in a lower throughput for small CW. For higher values of CW, the analytical and simulated throughput values shows a very good match. The effect of collisions is much more noticeable for WLANs A and C. This is because the two dominant states in this chain are $A_4^5 B_2^3 D_2^1$ and $C_4^5 B_2^3 D_2^1$, which means that WLANs A and C are contending for channel $\{5, \ldots, 8\}$ and therefore, may suffer collisions, while WLANs B and D are transmitting in two independent channels most of the time. Note that in all WLANs, for the optimal CW value, i.e., the CW value that maximizes the throughput, the analysis and the simulations match closely. 

The accuracy of these results is also another demonstration of the low sensitivity of DCB networks to backoff and transmission time distributions; in the simulated network, neither backoff nor the transmission time are exponentially distributed while the analytical results are obtained assuming exponential distribution for both. 



\subsection{IEEE 802.11ac Channelization}

\begin{figure}[t!]
\centering
\subfigure[]{\epsfig{file=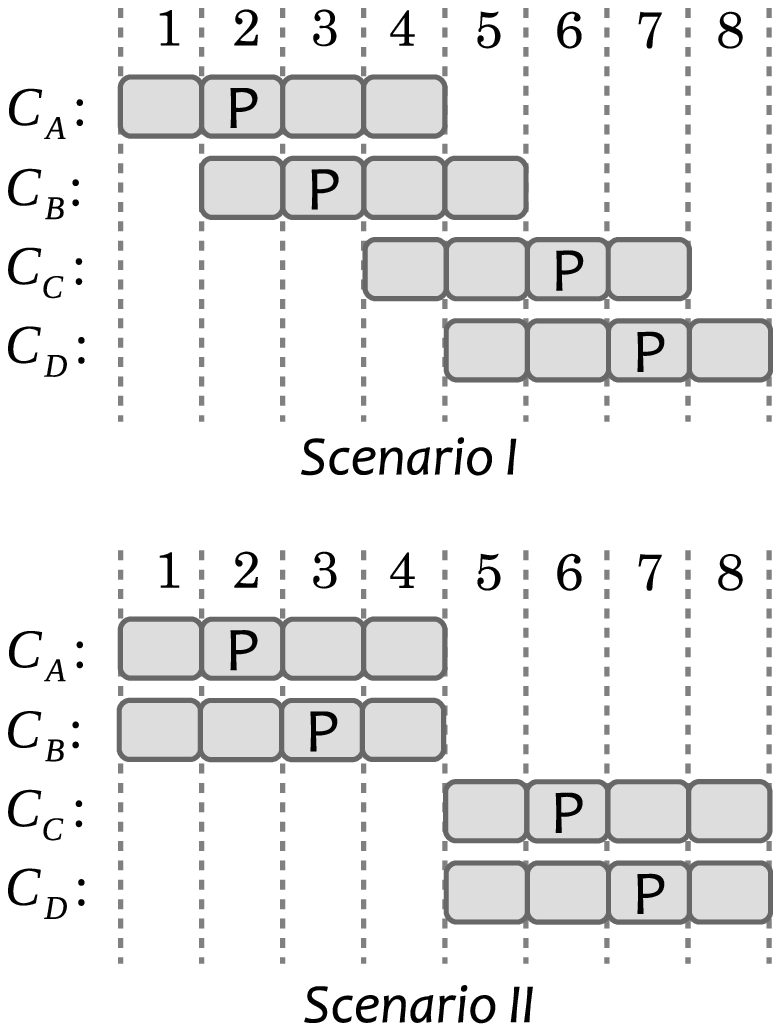,width=0.35\columnwidth,angle=0}\label{Fig:11ac_channels}}\quad\quad
\subfigure[]{\epsfig{file=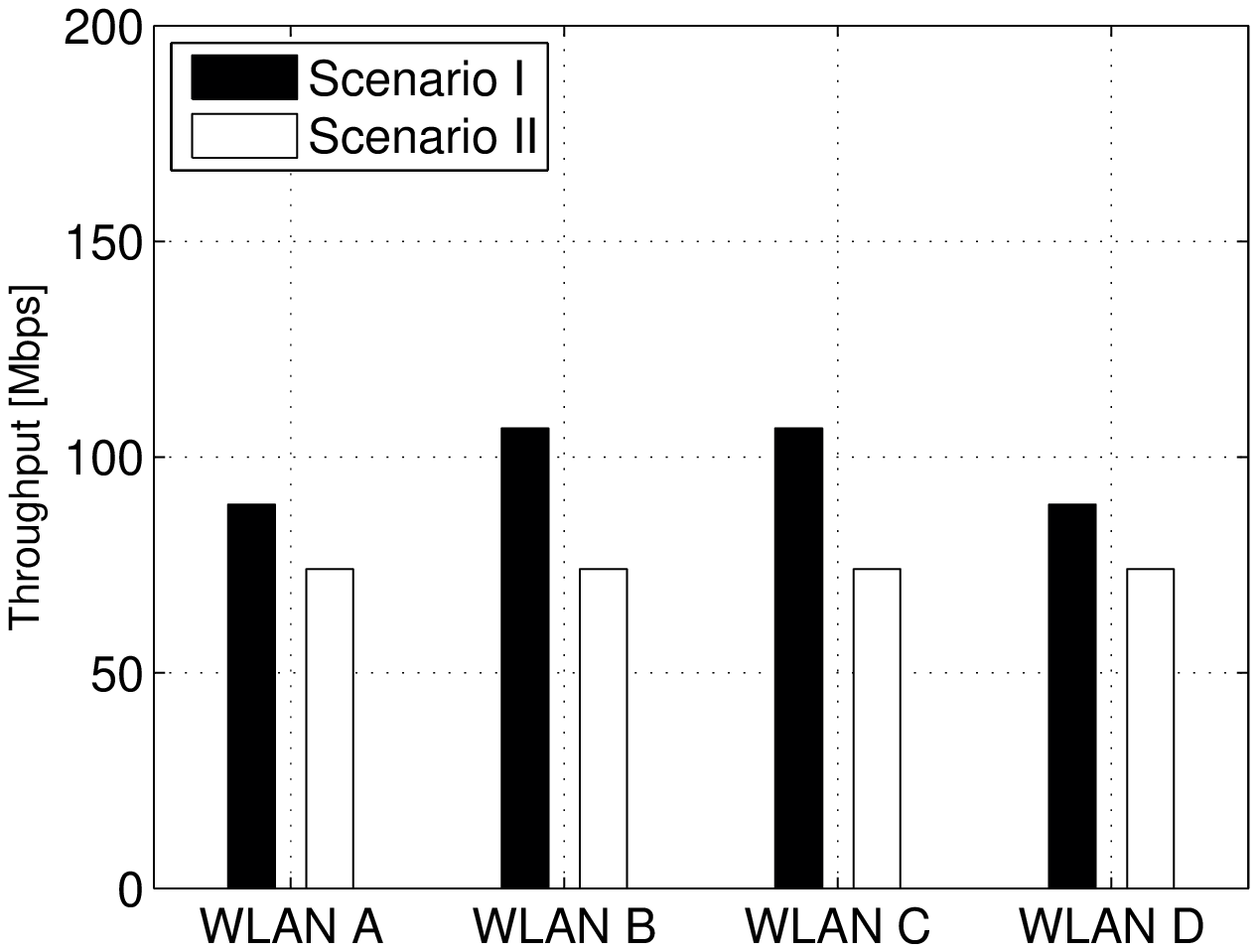,width=0.5\columnwidth,angle=0}\label{Fig:11ac_throughput}}

\caption{Scenario I operates under P2DCB, and II under 11acDCB.}\label{Fig:11ac_channelization}
\end{figure}

The IEEE 802.11ac channelization (Figure \ref{Fig:11ac_DCB_channelization}) has been designed to prevent partial overlaps between channels of the same width. It was shown \cite{bellalta2015interactions} that such channelization improves the performance when static channel bonding (SCB) is used. This is due to the fact that in SCB networks, for a \WLANi to start a transmission, all basic channels in $C_i$ must be found empty, and therefore a WLAN that partially overlaps with several WLANs is blocking all of them.  

To test the impact of the IEEE 802.11ac channelization in DCB networks, we compare the two scenarios depicted in Figure \ref{Fig:11ac_channels}. In Scenario I, there is partial overlap between channels of the same size and therefore it is not compatible with 11acDCB channelization. This scenario operates under P2DCB. Scenario II operates under 11acDCB. Since partial overlaps between channels of the same size are not allowed, the WLANs are forced to have fully overlapping channels if they all want to use channels containing four basic channels. The analytically obtained throughput for the four WLANs in both scenarios is plotted in Figure \ref{Fig:11ac_throughput}. As can be seen in this figure, the limitation imposed by 11acDCB can result in a lower throughput. 




\subsection{DCB vs SCB}

\begin{figure*}[ttth!]
    \centering
\subfigure[Throughput]{\epsfig{file=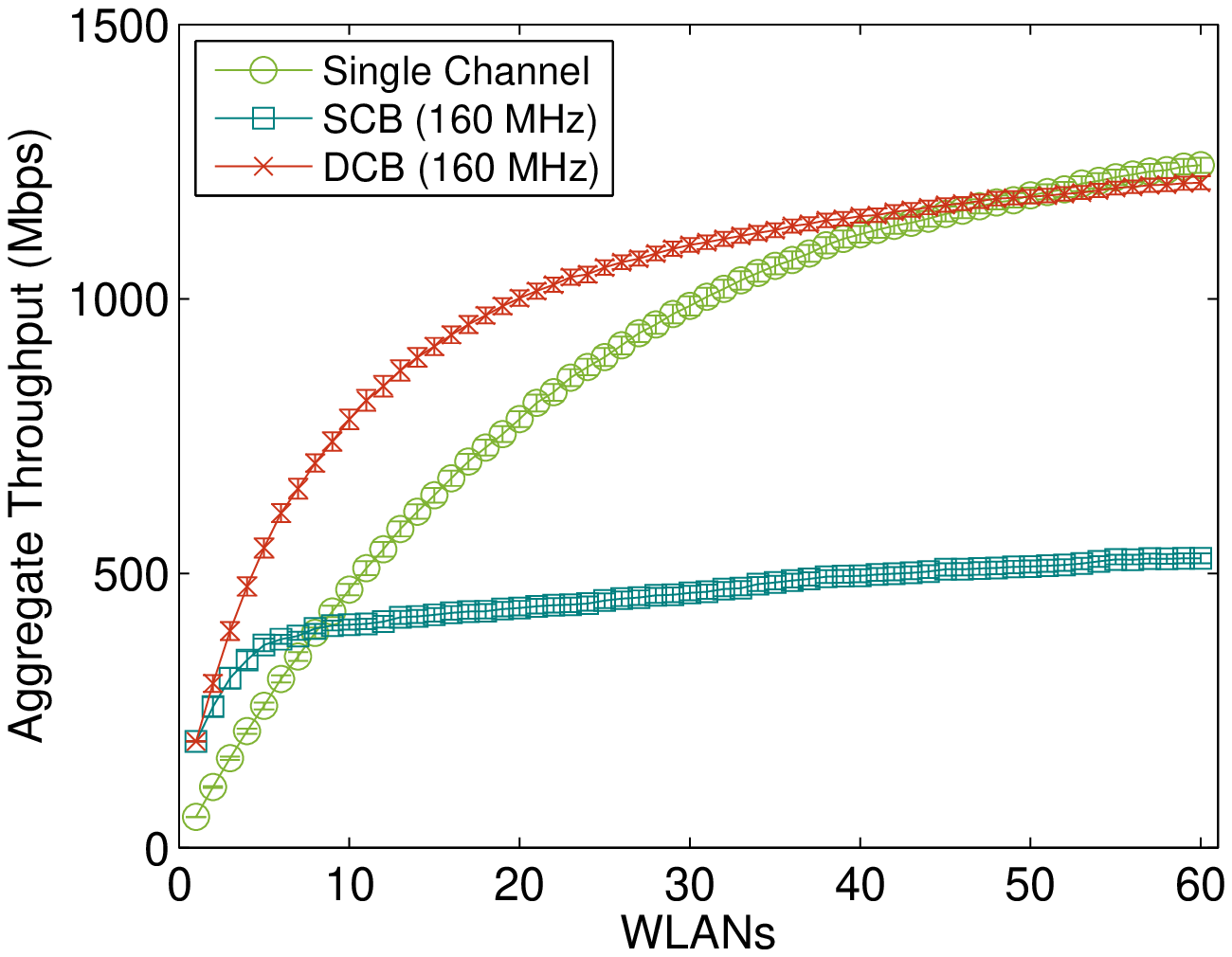,scale=0.35,angle=0}\label{Fig:FigureIncresingWLANs_S}}
\subfigure[Jain's Fairness Index]{\epsfig{file=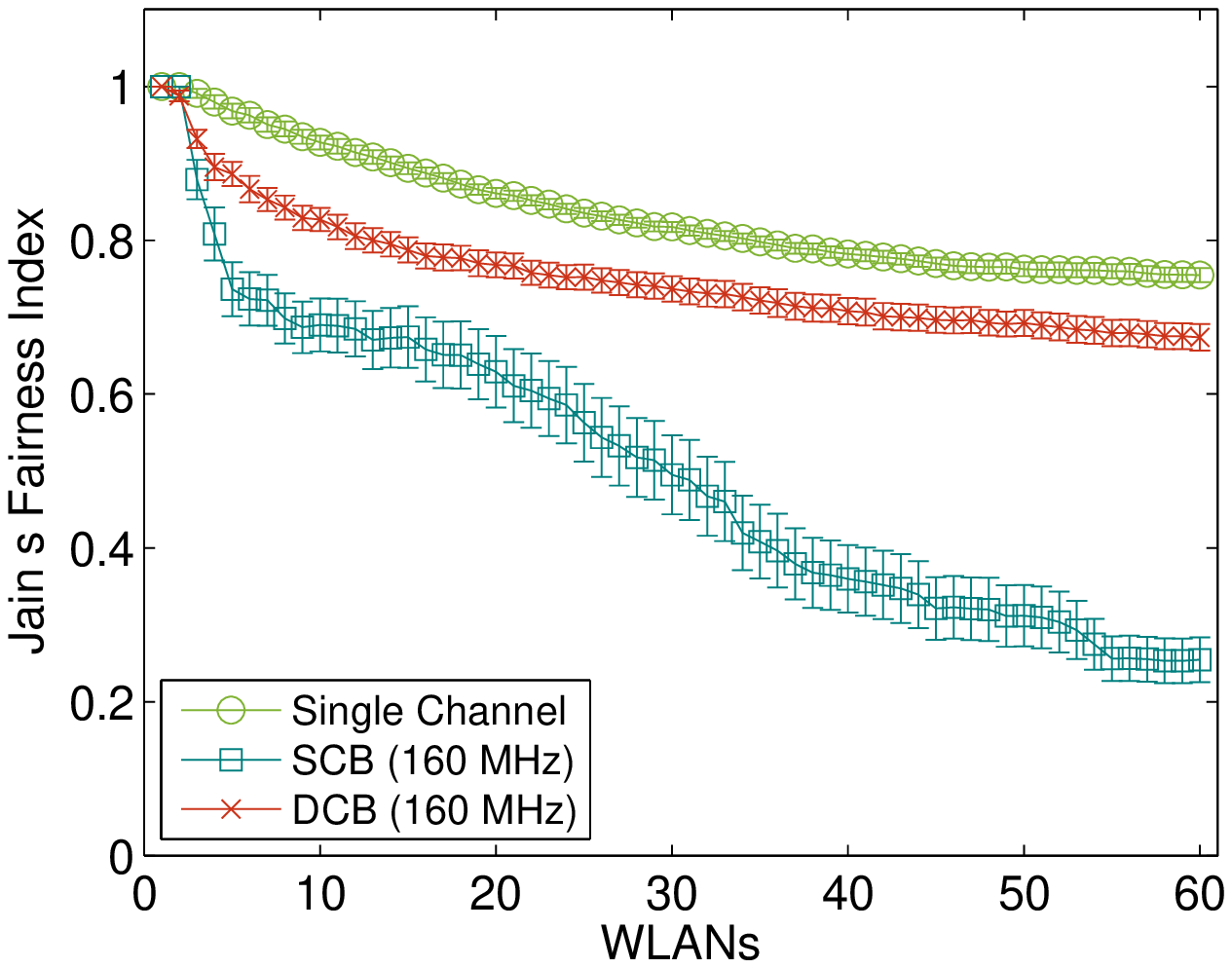,scale=0.35,angle=0}\label{Fig:FigureIncresingWLANs_J}}
\subfigure[Expected number of basic channels per transmission]{\epsfig{file=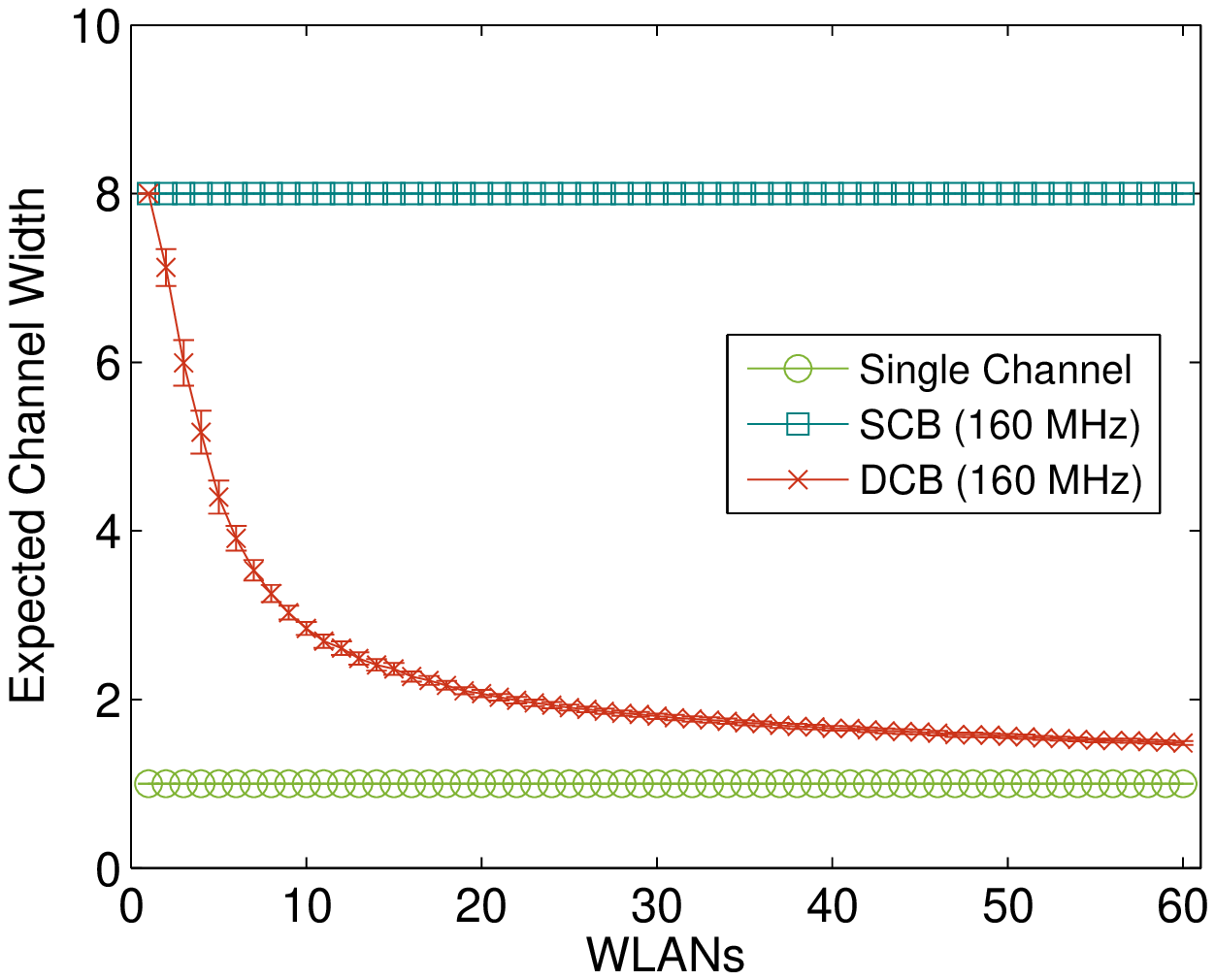,scale=0.35,angle=0}\label{Fig:FigureIncresingWLANs_EW}}
\caption{Aggregate throughput, Jain fairness index of the throughput, and expected number of basic channels used at each transmission when the number of neighboring WLANs increases. The error bars represent the 95 \% confidence intervals.}
\label{Fig:Figure1_S}
\end{figure*}

In this section we investigate the throughput achieved by DCB networks in dense WLAN scenarios using both static and dynamic channel bonding. Figure \ref{Fig:Figure1_S} shows different performance metrics for a group of neighboring WLANs when their number increases from 1 to 40. Each point in each of the  figures is the result of averaging 100 simulations of 100 seconds of duration each. The total number of available basic channels is set to 24. The initial channel assignment is done randomly as follows. For a \WLANi, the length of the operational channel is set to $N_i=8$. Then the location for the leftmost basic channel in $C_i$ is selected randomly from $\{1,\ldots, 17\}$. One of the 8 basic channels in $C_i$ is randomly selected as the primary channel for \WLANi. For both DCB and SCB, the P2DCB channelization is used when selecting the transmission channel $c_i$. In all cases the backoff and the packet transmission duration are exponentially distributed. 

The aggregate throughput, the Jain's Fairness Index, and the expected channel width used at every transmission by each WLAN are shown in Figure \ref{Fig:Figure1_S}. Figure \ref{Fig:FigureIncresingWLANs_S} shows that, on average, DCB outperforms SCB in terms of aggregate throughput, providing also a fairer throughput distribution between WLANs (Figure \ref{Fig:FigureIncresingWLANs_J}), for all network sizes, due to its ability to adapt the channel width to the channel congestion level (Figure \ref{Fig:FigureIncresingWLANs_EW}). When the density of WLANs is very high, the use of a single channel by all WLANs is the best option as it minimizes the contention between the WLANs and the expected channel width in DCB  converges to one. Note that the achieved throughput in all three cases can be improved by using an optimum channel allocation instead of the random allocation used for plotting Figure \ref{Fig:Figure1_S}.



\section{Conclusions} \label{Sec:Conclusions}

In this paper, we introduced an analytical framework, able to capture the interactions between neighboring WLANs when they use dynamic channel bonding to access the channel. The analytical framework models the behavior of DCB networks using CTMCs. We devised an algorithm that is able to systematically construct the Markov chain corresponding to any DCB network. We then used our analytical model to explain some key properties of DCB networks--e.g., their sensitivity to the backoff and transmission time distributions, the behavior of dominant states, and the high switching times between different dominant states--all of which cannot be observed in networks of WLANs operating on a single shared channel, or even those using static channel bonding. Finally, we evaluated the analytical model in realistic scenarios, showing that it is able to give accurate results when some of the assumptions used in the analysis are relaxed. 


\appendix[Calculation of Packet Transmission Duration] \label{Apx:TX_duration}
Using the parameters in Tables \ref{Tbl:Parameters80211ac} and \ref{Tbl:RateAdaptation}, the time required for transmission of a packet over $n$ basic channels, $\frac{1}{\mu_{n}}$, can be calculated as
\begin{small}
\begin{align}
	\frac{1}{\mu_{n}}=2T_{\text{PHY}}+\left(\left\lceil \frac{\text{SF} + K_{\text{A}} (\text{MD}+\text{MH}+L_d) + \text{TB}}{L_{\text{DBPS}}(n)}\right\rceil T_s \right)
	+ T_{\text{SIFS}}+\left(\left \lceil \frac{\text{SF} + L_{\text{BA}} + \text{TB} }{L_{\text{DBPS}}(1)} \right \rceil T_s \right)+T_{\text{DIFS}}+T_{\text{slot}}
\end{align}\label{Eq:Ts}
\end{small}
where $T_{\text{PHY}}=40 \mu$s is the duration of the PHY-layer preamble and headers, $T_s=4~\mu$s is the duration of an OFDM (Orthogonal Frequency Division Multiplexing) symbol. SF is the \textit{service field} ($16$ bits), $\text{MD}$ is the \textit{MPDU Delimiter} ($32$ bits) MH is the \textit{MAC header} ($288$ bits), TB is the number of \textit{tail bits} ($6$ bits),  $L_{\text{BA}}$ is the \textit{Block-ACK} length ($256$ bits), $L_{\text{DBPS}}(n)=K_\text{m}R \xi(n)$ is the number of bits in each OFDM symbol, with $K_\text{m}$, $R$, and $\xi(n)$ are given in Table \ref{Tbl:RateAdaptation}. $K_{\text{A}}=64$ is the number of packets that are aggregated in each transmission. Finally, $T_{\text{DIFS}}=34~\mu$s and $T_{\text{SIFS}}=16~\mu$s are the DIFS and SIFS duration, respectively.


\bibliographystyle{unsrt}
\bibliography{TheBib}





\end{document}